\documentclass{acm_proc_article-sp}
 \usepackage[utf8]{inputenc}

\usepackage{multicol}
\usepackage{dsfont}
\usepackage{picins}
\usepackage{amsmath}
\usepackage{amssymb}
\usepackage{latexsym} 
\usepackage{mathrsfs}
\usepackage{amsmath}
\usepackage{amssymb}
\usepackage{enumerate}
\usepackage[all]{xypic}
\usepackage{graphicx}
\newdef{definition}{Definition}
\newdef{lemma}{Lemma}
\newdef{remark}{Remark}
\newdef{theorem}{Theorem}
\newdef{proposition}{Proposition}
\newdef{claim}{Claim}
\usepackage{parallel}

%% PACKAGES

\newcommand{\cutout}[1]{}

%% WORKING VERSION

\newcommand{\longversion}[1]{}

%\newenvironment{keywords}{\textbf{Key words.}}{}

%% SPACES

%% ENVIRONMENTS
%%\itemsep=0mm
% \newenvironment{myitemize}[1][$\bullet$]{%
%   \begin{list}{#1}{%
%       \leftmargin=2.2mm%
%       \labelsep=1mm%
%       \itemsep=0mm%
%       \labelwidth=1mm%
%       \parsep=1mm
% }%
% }{\end{list}}

%%

\newcommand{\mat}[1]{\mathcal{#1}}

\newcommand{\rTo}{\longrightarrow}

\newcommand{\set}[1]{\{\,#1\,\}}

 \newcommand{\tand}{\text{ and }}

 \newcommand{\tst}{\text{ s.t. }}

%CSP opesrators
\newcommand{\myequiv}{\mathfrak{R}}
\newcommand{\actset}{\mat{L}}
\newcommand{\inter}{\bigtriangleup}
\newcommand{\ou}{+}
\newcommand{\delay}[1]{\Theta^{#1}}
\newcommand{\fleche}{\longrightarrow}
\newcommand{\flecheop}{\leadsto}
\newcommand{\reduc}{\hookrightarrow}
\newcommand{\flechep}{\rightarrow}
\newcommand{\dr}[1]{d(#1)}
\newcommand{\evn}[1]{\psi({#1})}

\newcommand{\evns}{\mat{E}}
\newcommand{\clk}{Clk}
\newcommand{\fin}[2]{\mathcal{F}^{#1}(#2)}
\newcommand{\fint}[3]{(\mathcal{F}^{#1}{+ {#2})}(#3)}
\newcommand{\tcsbis}{\sim_{\mathbb{T}}}

\newcommand{\tconf}{\mathbb{C}_{\tau}}
\newcommand{\myet}{\wedge}
\newcommand{\myou}{\vee}
\newcommand{\myetb}{\bigwedge}
\newcommand{\myoub}{\bigvee}
\newcommand{\reff}[3]{\rho({#1},{#2},{#3})}
\newcommand{\st}{\mathbb{C}} 
\newcommand{\sem}[1]{[\![#1]\!]}

\newlength{\myhidel}
\newcommand{\zero}{\mathds{O}}
\newcommand{\aconf}[2]{\langle #1, #2 \rangle}
\newcommand{\for}[1]{\mathbb{F}(#1)}
\newcommand{\as}{\nu}
\newcommand{\dom}{\mathbb{R}^{+}}

\newcommand{\lbul}[1]{{}^{\bullet}{#1}}

\newcounter{casecounter}
\def\thecasecounter{(\roman{casecounter})}
\newenvironment{proofbycases}[1][0]{
  \setcounter{casecounter}{#1}%
}{}

\newenvironment{caseinproof}{%
  \refstepcounter{casecounter}%
  %\ifthenelse{\equal{casecounter}{1}}{}{\par}%
  \vskip 1pt
  \noindent%
\textbf{\emph{Case~\thecasecounter.}}~
}{\par}

%% CORRECTIONS 

%% SECTIONING

%\newcommand{\card}{\mathrm{card}\,}
%\newcommand{\dom}{\mathrm{dom}\,}
%\newcommand{\cod}{\mathrm{cod}\,}

\newcommand{\myemph}[1]{\textbf{#1}}
\renewcommand{\cong}{\equiv}

%\newcommand{\Rho}{\Re}

%\renewcommand{\above}{\rightarrow^{*}}

%\newcommand{\below}{\text{ is in the subtree of }}

%%%%%%%%%%%%%%%%%
%%% xypic
%%%%%%%%%%%%%%%%%

%\def\deplacement{d(0.1)r(0.2)}
%\newgraphescape{p}[1]{[]*+{#1}([u(0.5)r(0.2)]*+{\pi})}

%\newgraphescape{p}[1]{[]*+{#1}*\cir{}="P"([d(0.2)r(0.4)]*+{\pi})}
\newgraphescape{p}[1]{[]*+{#1}*\cir{}="P"([d(0.2)r(0.4)]*+{\textrm{A}})}
%\newgraphescape{P}[1]{[]*++{#1}*\cir{}="P"([d(0.12)r(0.29)]*+{\pi})}
\newgraphescape{P}[1]{[]*++{#1}*\cir{}="P"([d(0.12)r(0.29)]*+{A})}

%\newgraphescape{s}[1]{[]*+{#1}*\cir{}="P"([d(0.2)r(0.4)]*+{\sigma})}
\newgraphescape{s}[1]{[]*+{#1}*\cir{}="P"([d(0.2)r(0.4)]*+{\textrm{E}})}
\newgraphescape{o}[1]{[]*+{#1}*\cir{}}
\newgraphescape{O}[1]{[]*++{#1}*\cir{}}

\newgraphescape{g}[1]{%
  []!s{v_{#1,1}}(:[l]!o{x_{#1,1}})
  :[d]!p{v_{#1,2}}(:[l]!o{x_{#1,2}})
  :[d]!s{v_{#1,3}}(:[l]!o{x_{#1,3}})
  :[d]!p{v_{#1,4}}(:[l]!o{x_{#1,4}})
  :[d]!s{v_{#1,5}}(:[l]!o{x_{#1,5}})
  "P"="E_{#1}"
}

\newgraphescape{a}[1]{[]*+{#1}*\frm{o}="P"([d(0.2)r(0.4)]*+{\pi})}
\newgraphescape{b}[1]{[]*+{#1}*\frm{o}="P"([d(0.2)r(0.4)]*+{\sigma})}
\newgraphescape{c}[1]{[]*+{#1}*\frm{o}}

\newgraphescape{j}[1]{%
  []!b{a_{#1,1}}(:[l]!c{x_{#1,1}})"P"="B_{#1}"
  :[d]!a{a_{#1,2}}(:[l]!c{x_{#1,2}})
  :[d]!b{a_{#1,3}}(:[l]!c{x_{#1,3}})
  :[d]!a{a_{#1,4}}(:[l]!c{x_{#1,4}})
  :[d]!b{a_{#1,5}}(:[l]!c{x_{#1,5}})
  "P"="E_{#1}"
}

%\newgraphescape{p}#1{[]*+[F]+[]{#1}} 
% \newgraphescape{p}#1{[]*=<1.6em>[o]{#1}*\frm{o}} 

% \newgraphescape{g}#1{
%   []="S"[d(0.05)]
%   ( -[d(0.6)d(0.6)l(0.6)]-[r(0.6)r(0.6)]-[r(0.6)l(0.6)],
%    -[d(0.6)d(0.6)r(0.6)]="RD")
%  [d(0.8)]*{#1}
% }

% \newgraphescape{G}#1{
%   []*{:}
%   [r(0.6)u(0.8)]
%   ( -[d(0.6)d(0.6)l(0.6)]-[r(0.6)r(0.6)]-[r(0.6)l(0.6)],
%    -[d(0.6)d(0.6)r(0.6)])
%  [d(0.8)]*{#1}
%  [r(0.6)]*{:}
% }

% \newgraphescape{w}#1#2{
%   []!p{\firstplayer{#1}}="S" (:[r]!p{\firstvariable{#2}}) 
%   :[d]!p{\secondplayer{#1}} (:[r]!p{\secondvariable{#2}})
%   :[d]!p{\firstplayer{#1}} (:[r]!p{\thirdvariable{#2}}) 
%  }

% \newcommand{\firstvariable}[1]{\ifx#1a a_{0}\else b_{0}\fi}
% \newcommand{\secondvariable}[1]{\ifx#1a a_{1}\else b_{1}\fi}
% \newcommand{\thirdvariable}[1]{\ifx#1a a_{2}\else b_{2}\fi}
\newcommand{\firstvariable}[1]{\ifthenelse{\equal{#1}{a}}{a_{0}}{b_{0}}}
\newcommand{\secondvariable}[1]{\ifthenelse{\equal{#1}{a}}{a_{1}}{b_{1}}}
\newcommand{\thirdvariable}[1]{\ifthenelse{\equal{#1}{a}}{a_{2}}{b_{2}}}
\newcommand{\firstplayer}[1]{#1}
\newcommand{\secondplayer}[1]{
  \ifthenelse{\equal{#1}{\pi}}{\sigma}{
    \ifthenelse{\equal{#1}{\sigma}}{\pi}{
      \ifthenelse{\equal{#1}{M}}{O}{M}
      }
    }
}

\ifx\TAC\undefined
\RequirePackage{ifthen}
\newcommand{\Whitstar}{}
\newcommand{\Whitvariablea}{a_{0}}
\newcommand{\Whitvariableb}{a_{1}}
\newcommand{\Whitvariablec}{a_{2}}
\newcommand{\Whitlabel}[3]{
  \ifthenelse{\equal{#2}{\pi}}{\renewcommand{\Whitstar}{}}
  {\renewcommand{\Whitstar}{*}}
  \ifthenelse{\equal{#3}{a}}{
   \renewcommand{\Whitvariablea}{a_{0}}
   \renewcommand{\Whitvariableb}{a_{1}}
   \renewcommand{\Whitvariablec}{a_{2}}
   }
  {\renewcommand{\Whitvariablea}{b_{0}}
   \renewcommand{\Whitvariableb}{b_{1}}
   \renewcommand{\Whitvariablec}{b_{2}}}
 W_{#1}^{\Whitstar}(\Whitvariablea,\Whitvariableb,\Whitvariablec)
}
\newgraphescape{A}#1#2#3#4#5{
  []!p{\firstplayer{#2}}="S"
  ( -[l(#4)]-[d(#5)]-[d(#5)]-[r(#4)]!p{\firstplayer{#2}}="B",
  -[r(#4)]-[d(#5)]-[d(#5)]-"B")
  [d(#5)]*{\Whitlabel{#1}{#2}{#3}}
  "B"
}
\newgraphescape{T}#1#2#3{[]!A{#1}{#2}{#3}{0.5}{0.8} }

\fi

%%%%%%%%%%%%%%%%%%%%%%%
%%% Imported from FOSSACS03

 %% header
 %% natural numbers
%%def\set#1{\{#1\}} %% set former
%\newcommand{\rTo}{\longrightarrow} %% arrow to right.
 %% out of component rel.
%\newcommand{\homl}{$\langle\!\langle$} %% left Mediator game delimiter
%\newcommand{\homr}{$\rangle\!\rangle$} %% right Mediator game delimiter
%\newcommand{\card}{\mathrm{card}\,} %% cardinality
 %% quotient with \sim 
 %% bigwedge
 %% bogvee
 %% logical interpretation
 %% maximal strongly connected component
%------------------------------

%% Graphs

%% Complexity classes

%% Logics

%% Classes of graphs

%\newcommand{\qed}{}
%---------------------------------------------------

%\newcommand{\Strat}{\mathcal{S}}
%\newcommand{\Ent}[1]{\mathcal{E}(#1)}
%\newcommand{\lcollapse}[2][]{\overline{\bigoplus}^{#1}_{#2}}

%% Graphs

%% Classes of graphs

%% Spaces

%  \newenvironment{abstract}{}{\emph{}}

%\newcommand{\rTo}{\longrightarrow}
%\newcommand{\tif}{\text{ if }}
%\newcommand{\tand}{\text{ and }}
%\newcommand{\ET}[1]{\widetilde{{\mathcal E}}(#1)}

%\newcommand{\cutout}[1]{}
%-------------------------------------------------

%added from Lyon
\newgraphescape{G}#1{
  []="L"[l(0.2)]
  [r(0.6)u(0.8)]
  ( -[d(0.6)d(0.6)l(0.6)]-[r(0.6)r(0.6)]-[r(0.6)l(0.6)],
   -[d(0.6)d(0.6)r(0.6)])
  [d(0.8)]*+{#1}="C"
%  [r(0.6)]*+{:}="R"
 [r(0.4)]="R"
}
\newgraphescape{c}{
  []="A0"
  [l(0.5)d(0.15)]*=<1em>[o][F]{}="A"
  [r]*=<1em>[o][F]{}="B"
  !{"A"!R}[d(0.0)]="AD"
  !{"B"!L}[d(0.0)]="BD"
  "AD"-@`{"A0"+(0,0.15)}"BD"
  "A0"
}
\newgraphescape{d}{
  []="A0"
  [u(0.5)r(0.15)]*=<1em>[o][F]{}="A"
  [d]*=<1em>[o][F]{}="B"
  !{"A"!D}[d(0.0)]="AD"
  !{"B"!U}[d(0.0)]="BD"
  "AD"-@`{"A0"+(-0.15,0)}"BD"
  "A0"
}
\newgraphescape{e}{
  []="A0"
  [u(0.5)l(0.15)]*=<1em>[o][F]{}="A"
  [d]*=<1em>[o][F]{}="B"
  !{"A"!D}[d(0.0)]="AD"
  !{"B"!U}[d(0.0)]="BD"
  "AD"-@`{"A0"+(0.15,0)}"BD"
  "A0"
}
%%
%attention
%\newgraphescape{p}[1]{
%  []*{#1}*+++[o]{}*+\frm{o}
%}

%
% \newgraphescape{E}[1]{
%   []!p{#1}="#1"
% }

\newgraphescape{E}[1]{
  []*{#1}*+++[o]{}*+\frm{o}="#1"
}
%---------

%%%% debut macro %%%%
\makeatletter
\renewcommand{\thefigure}{\ifnum \c@section>\z@ \thesection.\fi
 \@arabic\c@figure}
\@addtoreset{figure}{section}
\makeatother

%\newtheorem{fact}[theorem]{Fact}
%\newtheorem{conjecture}[conjecture]{Conjecture}

%%%% fin macro %%%%

%%% Local Variables: 
%%% mode: latex
%%% TeX-master: "0"
%%% End: 

\begin{document}

\title{From Causality Semantics to Duration Timed Models}

%\subtitle{[Extended Abstract]
%\titlenote{A full version of this paper is available as
%\textit{From Causality Semantics to Duration Timed Models}

\numberofauthors{1}
\author{
%
% 1st. author
Walid Belkhir\\
       \affaddr{Laboratoire d'Informatique Fondamentale de Marseille }\\
       \affaddr{}\\
       \affaddr{Universit\'e de Provence, Marseille, France.}\\
       \email{belkhir@cmi.univ-mrs.fr}
}

\date{30 July 1999}
% Just remember to make sure that the TOTAL number of authors
% is the number that will appear on the first page PLUS the
% number that will appear in the \additionalauthors section.

\maketitle
\begin{abstract}
 The interleaving semantics is not compatible with both action refinement and durational  actions. Since many true concurrency semantics are congruent w.r.t. action refinement, notably the causality and the maximality ones \cite{mahboulthese,Glabthese}, this has challenged us to study the dense time behavior  - where the actions are of \emph{arbitrary fixed duration} -  within the   causality semantics of Da Costa \cite{mahboulthese}.     
 
We extend the causal transition systems with the  clocks and the timed constraints, and thus we obtain an over class of timed automata  where the actions  need not to be atomic.   We define a real time extension  of  the formal description technique CSP,  called \emph{duration-CSP}, by attributing  the duration to actions.  We give  the operational timed  causal  semantics of  duration-CSP  as well as its denotational semantics over the class of timed causal transition systems.   Afterwards, we prove that the two semantics are equivalent.  Finally we extend the duration-CSP language with a refinement operator $\rho$ - that allows to replace an action with a process - and  prove that it preserves the timed causal bisimulation.  
\end{abstract}

% A category with the (minimum) three required fields
%\category{H.4}{Information Systems Applications}{Miscellaneous}
%A category including the fourth, optional field follows...
%\category{D.2.8}{Software Engineering}{Metrics}[complexity measures, performance measures]

%\terms{Delphi theory}

\keywords{Causality semantics, true concurrency, durational process algebra, timed automata, action refinement.   } % NOT required for Proceedings
\section{Introduction}

Many complex   systems such as communication protocols, networks and embedded systems require a top down design where processes are modeled at different levels of abstraction. To carry on, at every level of abstraction, each action might be replaced by a more complicated process. This is known as the concept of \emph{action refinement} \cite{CourtiatS93,Saidouni94syntacticaction,FecherMW02,e-lotos:ref}. It turns out that the actions are no longer atomic:  they are divisible into small parts. On the other hand, many industrial systems exhibit \emph{quantitative} behaviour, including timing and minimal performance. As a consequence, many real time extensions have been suggested for   process algebra \cite{Moller90,ET-LOTOS,Baeten00,Yi91}. However,  the common point of all these extensions is that they are  based on the action atomicity hypothesis.  It was pointed out \cite{Glabthese,mahboulthese,saidounithese} that the non atomicity of actions as well as the action refinement  require a truth concurrency semantics instead of the interleaving semantics. 

In this paper we suggest an approach that integrates both the timed constraints and  durational actions  without replacing    the action with the   two atomics events:   its starting and finishing ones, which leads to a huge combinatorial explosion.   Our approach consists in  using a truth concurrency semantics called the \emph{timed causal  semantics} which extends the causality semantics of \cite{mahboulthese}.  We extend the formal description technique CSP with both  durational actions and timed constraints. Afterwards we describe its semantics   by means of the timed causal semantics. To convince the reader that the interleaving semantics can not be used to deal with the durational actions,  let us consider the two processes $P=a;b;stop \;+\; b;a;stop$ and $Q=a; stop \;||| \;  b;stop$.   The process $P$ expresses a choice between $a$ followed by $b$ and $b$ followed by $a$. The process $Q$ expresses a parallel execution of $a$ and $b$. Note that, if we consider that  $duration(a)=0$ and $duration(b)=0$, then the two processes describe, in some sense, the same behavior. However, if we consider that  $duration(a) > 0$ and $duration(b)>0$  then the execution of $P$ requires at least an amount of times equals to $duration(a)+duration(b)$,  and the execution of $Q$ may be done in $max\set{duration(a),duration(b)}$.  
  
In a next step we extend the causal transition systems of \cite{mahboulthese} with clocks and timed constraints in the same spirit of the timed automata \cite{Alur94}. We shall call this model \emph{the timed causal transition system}.  We recall that  the causal transition system formalism  enriches the usual transition system one  with the notion of causality. As a consequence the timed causal transition system formalism allows to express the timed constraints over the actions of arbitrary duration without the need of replacing  each action by its starting and finishing event. As an application we show how to generate a timed causal transition system out of a duration-CSP process, and prove the correctness of this generation.

The paper is organized as follows. Section \ref{causality:sec} recalls the rudiments of the causality semantics as given in \cite{mahboulthese}. In section \ref{T-CTS:sec} the definition of the causal transition system formalism and its timed extension are given.  In section \ref{D-CSP:sec} we extend the kernel of CSP with action duration and timed constraints and we give its timed causal operational semantics.  In section \ref{Den:sec} we give the denotational semantics of duration-CSP in terms of the timed causal transition system model. This section is concluded by  a proof that the two semantics are equivalent, Theorem  \ref{main:theorem}.  In section \ref{ref:sec} we enrich the language duration-CSP with the  refinement operator $\rho$ that allows to replace an action with a more complicated process. The new language is called $duration-CSP_{\rho}$, afterwards, we give the timed causal semantics of this language, notably, the semantics of the refinement operator. Finally we prove that the refinement operator preserves the timed causal bisimulation, Theorem \ref{ref:th}.  \\

In section \ref{conc:sec} some current and future  works are given.  \longversion{Due the lack of space,  the proofs are given in the Appendix.}   The proofs are given in the Appendix.

\section{Causality semantics}\label{causality:sec}
In this section we recall, through simple examples,  the principles of the causality semantics as defined in \cite{mahboulthese}.  The aim of the causality semantics is to distinguish between the sequential and the parallel execution.  To be more precise, a parallel execution of two actions can not be substituted by their  interleaved execution. To this goal, a transition from state $s_1$ to $s_2$  has the  form $s_1 \stackrel{_{E}a_x}{\fleche} s_2$; it is  equipped with an extra data: \emph{(i)} the event $x$ which identifies the beginning of the execution of the action $a$, and \emph{(ii)} the (finite) set $E$ of events which corresponds to the set of causes of the action $a$, i.e. the action $a$ is possible if all the causes belonging to $E$ terminate. For example let us consider the two  processes $P$ and $Q$ defined by: $P= a;b;stop  \;\ou\;  b;a;stop$ and  $Q= a;stop \; |||\; b;stop$. We recall that  $";"$ is the prefixing operator,  $"||| "$ is the parallel composition , and $"+"$ is the choice operator.  At the  beginning, the execution  of both $P$ and $Q$ does not depend on any event, therefore the initial configuration associated to $P$ (resp. $Q$)  is  of the form ${}_{\emptyset}[P]$ (resp.  ${}_{\emptyset}[Q]$).  By applying the causality semantics to the configuration $_{\emptyset}[P]$  the  following derivations are possible:  
$$
_{\emptyset}[P] \stackrel{{}_{\emptyset}a_x}{\fleche}\;\;  {}_{\set{x}}[b;stop] \stackrel{_{\set{x}}b_y}{\fleche}\;\; _{\set{y}}[stop]
$$ 
The event $x$ (resp. $y$)  corresponds to the beginning of the execution of the action $a$ (resp. $b$). According to the semantics of the prefix operator $";"$, the execution of the action $b$ depends on the termination of the action $a$. Again, by applying the causality semantics to the configuration $_{\emptyset}[Q]$,  the following derivations are possible:   
  $$
_{\emptyset}[Q] \stackrel{{}_{\emptyset}a_x}{\fleche}\;\;  {}_{\set{x}}[stop] \; ||| \; _{\emptyset}[b;stop]  \stackrel{_{\emptyset}b_y}{\fleche}\;\;  _{\set{x}}[stop]\; |||\;  _{\set{y}}[stop]
$$           
As before,  the event $x$ (resp. $y$)  corresponds to the beginning of the execution of the action $a$ (resp. $b$). The main difference is that both the actions $a$ and $b$ does not depend on each other. 

The Figure \ref{CTS:fig} shows all the possible derivations which can be obtained by applying the causality semantics to $P$ and $Q$. This gives rise to the notion of \emph{causal transition systems} which will be formalized in the next section.    
\begin{figure}
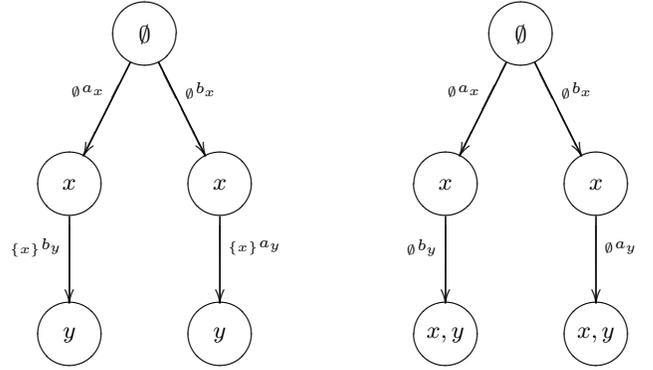

 $$ 
  \xygraph{
   !{<0cm,0cm>;<1cm,0cm>:<0cm,1cm>::}
    []!E{\emptyset}="A"
    (:[ddl]!E{x}="Al"
    :[dd]!E{y}="Ald")
    :[ddr]!E{x}="Ar"
    :[dd]!E{y}="Ard"
      %%% 
      "A"-@{}@/-0.5em/"Al"_{{}_{\emptyset}a_{x}}
       "Al"-@{}@/-0.5em/"Ald"_{{ }_{\{x\}}b_{y}}
      "A"-@{}@/-0.5em/"Ar"^{{}_{\emptyset}b_{x}}
       "Ar"-@{}@/-0.5em/"Ard"^{{ }_{\{x\}}a_{y}}
    "A"[u(1)]*+{P= a;b;stop \; + \; b;a;stop}
    "A"[r(3)]([d(1.5)]*+{}) [r(2)]
    %%% 
        []!E{\emptyset}="B"
    (:[ddl]!E{x}="Bl"
    :[dd]!E{{x, y}}="Bld")
    :[ddr]!E{{x}}="Br"
    :[dd]!E{{x,y}}="Brd"
      %%% 
      "B"-@{}@/-0.5em/"Bl"_{{}_{\emptyset}a_{x}}
       "Bl"-@{}@/-0.5em/"Bld"_{{ }_{\emptyset}b_{y}}
      "B"-@{}@/-0.5em/"Br"^{{}_{\emptyset}b_{x}}
       "Br"-@{}@/-0.5em/"Brd"^{{ }_{\emptyset}a_{y}}
    "B"[u(1)]*+{Q= a;stop \;||| \; b; stop}
 }
  $$
\caption{Causal transistion systems of the processes $P$ and $Q$.}
\label{CTS:fig}
\end{figure}

\section{Timed causal transition systems}\label{T-CTS:sec}
In this section we formalize the notion of causal transition systems. Afterwards,   we enrich  them with clocks  and timed constraints  in order to specify  the timed behaviour. Throughout this paper we let $\evns$ be a countable set of \emph{events},  ranged by $x,y,z \dots$. Let $\actset$ be a countable set of \emph{actions},  ranged by $a,b,c, \dots$. If $a \in \actset$ then we denote by $\dr{a}$ the \emph{duration} of the action $a$,   where $\dr{a}\in \dom$. 

\begin{definition}\label{causal:trans:sys:def}
 A \myemph{causal transition system}, or a \myemph{CTS} for short,  over $\evns$ is a tuple $(S,s_0,T,l,\psi,\zeta,\eta)$ where: 
\begin{itemize}
\item $(S,s_0,T,l)$ is a labeled transition system over $\actset$, that is, $S$ is a finite set of states,  $s_0\in S$ is the initial state, $T \subseteq S\times S$ is the set of transitions, and $l: T \rTo \actset$ is the  labeling function of transitions,    
\item $\psi: S \rTo 2^{\evns}$ is the function that associates to each state a finite set of events,  the latter being potentially in progress at this state,  
\item $\zeta: T \rTo 2^{\evns}$ is the function that associates to each transition $t \in T$ a finite set of events, these events denote the \myemph{direct causes} of $t$,
\item    $\eta: T \rTo  \evns$ is the function that associates to each transition $t\in T$ the event attached to the occurrence of the action $l(t)$,    
\end{itemize}  
such that the following conditions hold: for each transition $(s,s')\in T$  we have that
\begin{enumerate}[{i.}]
\item $\eta(s,s')\in \psi(s')$,
\item $\zeta(s,s') \cap ( \psi(s')- \eta(s,s')) = \emptyset$\longversion{, i.e. the terminated events - apart $\eta(s,s')$- should not appear in $s'$},
\item $\zeta(s,s') \subseteq \psi(s)$ and $\psi(s')-\zeta(s,s') \subseteq \psi(s)$\longversion{, i.e. both the terminated and the non-terminated events should be included in $s$}. 
\end{enumerate}
\end{definition}

In the next a transition $t$ will be denoted by $s_1\stackrel{_{E}a_x}{\fleche} s_2$, i.e. $l(t)=a$, $\zeta(t)=E$, and $\eta(t)=x$. 

\parpic[r]{
 $$ 
  \xygraph{
   !{<0cm,0cm>;<0.4cm,0cm>:<0cm,1cm>::}
    []!E{\emptyset}="A"
    :[dd]!E{x}="A1"
    :[dd]!E{y}="A2"
      %%% 
      "A"-@{}@/-0.5em/"A1"_{{}_{\emptyset}a_{x}}
       "A"-@{}@/-0.5em/"A1"^{c_x:=0; \;\; 0 \le c_x \le 4}
       "A1"-@{}@/-0.5em/"A2"_{{ }_{\{x\}}b_{y}}
        "A1"-@{}@/-0.5em/"A2"^{c_y:=0; \;\;  c_x = \dr{a}+100?}
        "A"[u(0.75)r(1.7)]*+{R=a\set{4};  \delay{100}b;stop}
        "A"[d(04.75)r(1.7)]*+{\textrm{Fig 3.1: The timed-CTS of } R}
}
$$
}
 \textbf{The key idea.} Now, we add to the CTS the notions of clocks and timed  constraints in order to be able to specify the quantitative behaviour over durational actions. The key idea consists in considering the events themselves as a sort of local clocks. As a consequence, the values of the  clocks give sufficient information about the progress of the  actions,  notably about their termination. For instance consider the timed  process $R$ defined by $R=a\set{4}; \delay{100}b$ which specifies  that the action $a$ can occur in the  interval $[0,4]$, and the action $b$ can occur after $100$ units of time counting from the termination of $a$.   The timed CTS corresponding to the process $R$ is depicted in Figure 3.1. In order to avoid any confusion, we denote the clock associated to the event $x$ by $c_x$ and not $x$. The semantics of the timed-CTS is close to that of  the timed automata. The construction of the timed-CTS   out of a of  duration-CSP process is  given  in Section \ref{Den:sec}.  \longversion{Note that we can omit  the dependency   relation and substitute it with clock constraints, for instance we can substitute $_{\set{x}}b_y$ by $b_y$ and add the constraint $c_x\ge \dr{a}$ to the $b$-transition, this constraint explicits the termination of action $a$.  This is possible because each action has a \emph{fixed} duration, otherwise, if the duration of some actions belongs to a fixed interval then the timed constrains can not replace the dependency relation, at least in a straightforward way. However,  in the next we keep the dependency relation in order to keep the model as generic as possible.} \\ The  definition of the timed-CTS follows.
%\ParallelRText{
%\begin{figure}

% \caption{The timed-CTS of  $R$.}
% \end{figure}  

\begin{definition}\label{T-CTS:def}
A \myemph{timed causal transition system}, or a \myemph{timed-CTS} for short, is a tuple $(S,T,s_0,l,\psi,\zeta,\eta,\clk,\Phi,\Lambda)$ where $(S,s_0,T,l,\psi,\zeta,\eta)$ is a causal transition system (see Def. \ref{causal:trans:sys:def}) and
\begin{itemize}
\item $\clk = \set{c}\times \evns$ is the set of clocks, that is, to each event $x\in \evns$ we associate a clock $c_x$,
\item $\Phi$ is a function that associates to each transition $t\in T$ a \myemph{timed constraint}, and  
\item $\Lambda: T \rTo 2^{\clk}$ is a function that associates to each transition the set of clocks which have to be reset to zero once this transition is executed.  
\end{itemize} 
\end{definition}
In the next, a transition $t$ of a timed-CTS will be denoted simply by $s_1 \stackrel{\langle _E a_x, \varphi, \lambda \rangle}{\fleche} s_2$, that is, $\Phi(t)=\varphi$ and $\Lambda(t)=\lambda$.
The set of timed constraints will be denoted by $2^{\varphi}$. The syntax of  the timed constraints  is given by the following  grammar: 
\begin{align*}% \textrm{ } \hspace{30mm}
\varphi  &::= \varphi \myet \varphi \;|\;  \varphi \myou \varphi \;|\; c_x \prec c \;|\; c \prec  c_x \hspace{18mm} \prec \in \set{<, \le} 
\end{align*}
where $c_x $ is a clock,  and $c\in \dom$ is positive real  constant.\\   
  The  timed-CTS inherits the semantics of both  timed automata \cite{Alur94}, and causal transition systems \cite{mahboulthese}. The semantics of a timed-CTS   is defined by means of a transition system over a set of configurations, each  configuration consists of \emph{(i)} the current state, \emph{(ii)} the current values of clocks, and \emph{(iii)} the  actions which are (potentially) in progress.  There are two kinds of transitions between configurations. The timed-CTS may either delay for an amount of time in the same configuration (delay transition), or follow an edge (action transition). \\
We use functions called \emph{clock assignments}, a mapping from $\clk$ to $\dom$. Let $\as$ denote such function, and  $\zero$ denote the clock assignment that maps all $c_x \in \clk$ to $0$. For  $d \in \dom$, let $\as +d$  denote the clock assignment that maps all $c_x \in \clk$ to $\as(c_x)+d$. For $\lambda \subseteq  \clk$, let $[\lambda \mapsto 0]\as$ denote the clock assignment that maps all clocks in $\lambda$ to $0$  and coincide with $\as$ for the clocks in $\clk \setminus \lambda$.    \\

\textbf{The semantics} of a timed-CTS is a transition system whose configurations  are pairs  $\aconf{s}{\as}$, the starting configuration  is $\aconf{s_0}{\zero}$,  and the transitions are given by the rules:
\begin{itemize}
\item $\aconf{s}{\as} \stackrel{d}{\fleche} \aconf{s}{\as + d}$, for $d\in \dom$, 
\item $\aconf{s}{\as} \stackrel{_{E}a_x} {\fleche} \aconf{s'}{\as'}$ if $s \stackrel{\langle _Ea_x, \varphi, \lambda \rangle}{\fleche} s'$ and moreover:  (i) $\as$ satisfies the constraint $\varphi$, (ii)  $\as'=[\lambda \mapsto 0]\as$, and  (iii) all the actions related  to the events $E$ have terminated. 
\end{itemize}

\longversion{
\subsection{Timed bisimulation of timed-CTS}
We define the notion of timed bisimulation between timed-CTS.
 Before this, let us fix some notations. \\
Let $f: A \rTo B$ and let $A' \subseteq A$ and $B' \subseteq B$. The parametrized restrictions of $f$ w.r.t.  its domain and co-domain are defined respectively as follows:
\begin{align*}  
&  f_{\pi_1(A')} := \set{(a,b)  \;| \;a \in A'} \hspace{2mm} \textrm{ and }
&   f_{\pi_2(B')} := \set{(a,b) \;|\; b \in B'}
\end{align*}

\begin{definition}\label{tcsbis:def}
  \myemph{A timed causal bisimulation} over the "states" of a timed-CTS  $\tcsbis$  is a  binary relation that comes with an events' bijection  $f: E \flechep E$, and satisfies the following conditions:
\begin{enumerate}[{1}.1.] 
\item if $\aconf{s_1}{\nu_1} \stackrel{_Ea_x} {\fleche} \aconf{s_1'}{\nu'_1}$ then there exists \\ $\aconf{s_2}{\nu_2} \stackrel{_Fa_y} {\fleche} \aconf{s_2'}{\nu'_2}$ such that 
\begin{enumerate}[{i.}]
\item  $u\in E$ if and only if  $f(u)\in F$, and
\item $(\aconf{s_1'}{\nu'_1},\aconf{s_2'}{\nu'_2})_{f'} \in  \tcsbis $  \\ where $f':= (f_{\pi_1(\psi(s_1')-x )})_{\pi_2(\psi(s_2')-y )} \; \cup \set{(x,y)}$.
\end{enumerate}
\item if $\aconf{s_1}{\nu_1} \stackrel{d} {\fleche} \aconf{s_1}{\nu'_1}$ then $\aconf{s_2}{\nu_2} \stackrel{d} {\fleche} \aconf{s_2}{\nu'_2}$ and $(\aconf{s_1}{\nu'_1},\aconf{s_2}{\nu'_2})_{f} \in  \tcsbis $.
 \end{enumerate}
%symmetrico
\begin{enumerate}[{2}.1.] 
\item if $\aconf{s_2}{\nu_2} \stackrel{_Fa_y} {\fleche} \aconf{s_2'}{\nu'_2}$ then there exists \\ $\aconf{s_1}{\nu_1} \stackrel{_Ea_x} {\fleche} \aconf{s_1'}{\nu'_1}$ such that 
\begin{enumerate}[{i.}]
\item  $u\in E $ if and only if  $f(u)\in F$, and
\item $(\aconf{s_1'}{\nu'_1},\aconf{s_2'}{\nu'_2})_{f'} \in  \tcsbis $  where \\ $f':= (f_{\pi_1(\psi(s_1')-x )})_{\pi_2(\psi(s_2')-y )} \;\cup \set{(x,y)}$.
\end{enumerate}
\item if $\aconf{s_2}{\nu_2} \stackrel{d} {\fleche} \aconf{s_2}{\nu'_2}$ then $\aconf{s_1}{\nu_1} \stackrel{d} {\fleche} \aconf{s_1}{\nu'_1}$ and  $(\aconf{s_1}{\nu'_1},\aconf{s_2}{\nu'_2})_{f} \in  \tcsbis $.
 \end{enumerate}
Two timed-CTS are  bisimilar iff there exists a timed causal bisimulation containing their initial configurations.
\end{definition}
}

\section{Duration-CSP and its operational timed causal semantics}\label{D-CSP:sec}

Now we introduce the action duration to  the formal description technique CSP \cite{Hoar85CSP}. Due to the lack of space the prefixing operator $"\flechep"$ is denoted by $";"$ . Moreover, we do not distinguish between the internal and the external choice.  The syntax of duration-CSP is given by the following grammar:
\begin{align*}
P ::= & \; stop \;|\; skip\set{d} \;|\;   \delay{d}P \;|\; a\set{d};P \;|\; P \ou Q \;|\; P|[L]|Q \;|\;  \\ & P \setminus L  \;|\; P \inter Q  
\end{align*}
where $d \in \dom$ and  $L \subseteq \actset$.\\ 
 The primitive process  $stop$  represents the  process that communicates nothing,  and $skip$  represents successful termination i.e. the process $skip\set{d}$ performs the successful termination action $\delta$ in the time interval $[0,d]$ and transforms into $stop$. 
Let $a\in \actset$ be an action and $d\in \dom$. The process $a\set{d}; P $ expresses  that the execution of $a$ must  be in the time interval $[0,d]$, and after the termination of $a$ this process behaves like $P$. The process $\delay{d}P$ means that the starting of  $P$ is possible only  after a passage of  $d$ units of time.  "+"    is the  choice operator.  The parallel composition $P |[L]|  Q$ allows computation in $P$ and $Q$ to proceed simultaneously and independently apart on the actions in $L$ on which  both processes must be synchronized. We shall write $|||$ for $|[\emptyset]|$.    The hiding operator  $P \setminus L$  makes the  actions in $L$ unobservable. The interruption operator $P \inter Q$ allows the computation to begin in $P$ and to be interrupted by $Q$.
    
\paragraph{Operational semantics of Duration-CSP} ~\\
Now we describe the behaviour of duration-CSP processes  step by step by means of the operational semantics over the \emph{timed causal configurations}. Before this, we first  define the  timed causal configurations and introduce some standard operations on them.  The untimed configurations and the related  operations have been defined in \cite{mahboulthese}.   
\begin{definition}\label{timed-causal-config:def}
The set $\tconf$  of \myemph{timed causal configurations} is defined as follows:
\begin{itemize}
\item for each duration-CSP process $P$ and for each \\ $E_{\tau} \in 2^{\evns \times \actset \times \dom}$, we have that $_{E_{\tau}}[P] \in \tconf$,
\item if $\mat{P}_{\tau} \in \tconf$ then $\delay{d} \mat{P}_{\tau} \in \tconf$, for every $d\in \dom$,
\item if $\mat{P}_{\tau} \in \tconf$ then $\mat{P}_{\tau} \setminus L \in \tconf$, and
\item if $\mat{P}_{\tau}, \mat{Q}_{\tau} \in \tconf$ then $\mat{P}_{\tau} \otimes \mat{Q}_{\tau} \in \tconf$, where \\ \hspace{13mm} $\otimes \in \set{\ou\; ,\; |[L]| \;, \; \inter}$.
\end{itemize} 
\end{definition}
For instance, the configuration $_{\set{x:a:t_x}}[P]$  means that the execution of the process $P$ depends on the termination of the action $a$ which is identified by the event $x$, moreover,  $t_x$ counts the  time elapsed   from the beginning of $a$.  We say that a timed causal configuration is in the \emph{canonical} form if it can not be simplified  by distributing  the set of events over the algebraic operators. For instance, the configuration $_{E_{{\tau}}}[a;stop \ou b;stop]$ is not in the canonical form because it can be reduced to the configuration $_{E_{\tau}}[a;stop] \; \ou _{E_{\tau}}[b;stop]$, the latter being in the  canonical form.
\begin{lemma}\label{canonical:lemma}
Every canonical  timed causal configuration in  $\tconf$  has one of the following forms:
\begin{align*}
&  _{E_{\tau}}[stop]   \hspace{5mm} _{E_{\tau}}[skip\set{d}]  \hspace{5mm} \delay{d} \mat{P}_{\tau}  \hspace{5mm} _{E_{\tau}}[a\set{d};P] 
 \hspace{5mm} \\ 
&  \mat{P}_{\tau} \ou \mat{Q}_{\tau}  \hspace{5mm} \mat{P}_{\tau}|[L]|\mat{Q}_{\tau}  \hspace{5mm} \mat{P}_{\tau} \setminus L  \hspace{5mm} \mat{P}_{\tau} \inter \mat{Q}_{\tau} 
\end{align*}
where $\mat{P}_{\tau}$ and $\mat{Q}_{\tau}$ are in the canonical  form. 
\end{lemma}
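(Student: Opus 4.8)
The plan is to prove Lemma~\ref{canonical:lemma} by structural induction on the definition of the set $\tconf$ given in Definition~\ref{timed-causal-config:def}, exploiting the distributivity of the event-set annotation over the algebraic operators $\ou$, $|[L]|$, and $\inter$. First I would make precise the reduction (``canonicalization'') relation that the informal discussion before the lemma hints at: the only way a configuration of the form $_{E_{\tau}}[P]$ fails to be canonical is when the process term $P$ has a top-level algebraic operator, i.e.\ $P = P_1 \ou P_2$, $P = P_1 |[L]| P_2$, or $P = P_1 \inter P_2$, in which case we rewrite $_{E_{\tau}}[P_1 \otimes P_2] \rightsquigarrow {}_{E_{\tau}}[P_1] \otimes {}_{E_{\tau}}[P_2]$. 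I would also note that the hiding and delay prefixes $P\setminus L$ and $\delay{d}P$ inside the brackets are pushed outward in the same manner (consistently with the third and second clauses of Definition~\ref{timed-causal-config:def}), so that after exhaustive rewriting every bracketed subterm $_{E_{\tau}}[\cdot]$ contains a process whose outermost constructor is one of $stop$, $skip\set{d}$, $a\set{d};P$, $\delay{d}P$, or $P\setminus L$.

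The induction itself has four cases matching the four clauses of Definition~\ref{timed-causal-config:def}. For the base clause, a configuration $_{E_{\tau}}[P]$: I do a secondary induction on the CSP syntax of $P$. If $P$ is $stop$, $skip\set{d}$, or $a\set{d};P'$, the configuration is already in the listed canonical form. If $P = \delay{d}P'$, then $_{E_{\tau}}[\delay{d}P']$ canonicalizes to $\delay{d}\,{}_{E_{\tau}}[P']$, which is of the form $\delay{d}\mat{P}_{\tau}$ once the inner configuration is put in canonical form by the induction hypothesis; similarly $_{E_{\tau}}[P'\setminus L]$ yields $\mat{P}_{\tau}\setminus L$. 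If $P$ has a top-level $\otimes \in \set{\ou, |[L]|, \inter}$, distribute $E_{\tau}$ over it and invoke the induction hypothesis on the two smaller bracketed configurations. For the three recursive clauses ($\delay{d}\mat{P}_{\tau}$, $\mat{P}_{\tau}\setminus L$, $\mat{P}_{\tau}\otimes\mat{Q}_{\tau}$), the configuration is already of one of the eight listed shapes, and we simply apply the induction hypothesis to each immediate subconfiguration to bring it into canonical form; the outer shape is preserved because none of these three constructors can be further simplified.

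The main obstacle, and the place that needs care rather than routine bookkeeping, is establishing that the canonicalization rewriting terminates and is confluent, so that ``the canonical form'' is well defined and the case analysis above is exhaustive. Termination follows from a measure argument: assign to a configuration the multiset of the syntactic sizes of the process terms appearing inside its $_{E_{\tau}}[\cdot]$ brackets, ordered by the multiset (Dershowitz--Manna) order; each distribution step $_{E_{\tau}}[P_1\otimes P_2]\rightsquigarrow {}_{E_{\tau}}[P_1]\otimes{}_{E_{\tau}}[P_2]$ and each pushing-out step for $\delay{d}$ and $\setminus L$ strictly decreases this measure. Confluence is immediate since the rewrite steps act on disjoint bracketed positions (local confluence plus termination gives confluence by Newman's lemma). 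Once this is in place, the structural induction closes the argument, and I would only need to double-check that distributing $E_{\tau}$ over the operators genuinely preserves membership in $\tconf$ — which is exactly what the recursive clauses of Definition~\ref{timed-causal-config:def} guarantee.
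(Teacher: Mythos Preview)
Your proposal is correct and follows essentially the same approach as the paper's own proof: both argue that any configuration of the form $_{E_{\tau}}[R]$ whose inner process $R$ has a compound top-level constructor ($\delay{d}$, $\ou$, $|[L]|$, $\setminus L$, $\inter$) can be reduced by pushing $E_{\tau}$ inward, so that the irreducible (canonical) configurations are exactly those listed. You go further than the paper in making termination and confluence of this rewriting explicit (via a multiset measure and Newman's lemma), which the paper leaves implicit; this extra care is welcome but not a different strategy.
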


Next  we assume that all the configurations are in the canonical form.
\begin{definition}\label{encap:def}
The function $\psi : \tconf \flechep 2^{\evns \times \actset \times \dom }$, that determines  the events of a given configuration is defined by:
\begin{align*}
&  \psi(_{E_{\tau}}[stop])  =    \psi( _{E_{\tau}}[skip\set{d}])  =\psi(_{E_{\tau}}[a\set{d};P]) = E_{\tau} \;\;\;\\  
&  \psi(\delay{d} \mat{P}_{\tau})  =  \psi(\mat{P}_{\tau} \setminus L)=\psi(\mat{P}_{\tau})\\ 
 &   \psi(\mat{P}_{\tau} \ou \mat{Q}_{\tau})= \psi(\mat{P}_{\tau}|[L]|\mat{Q}_{\tau}) = \psi(\mat{P}_{\tau} \inter \mat{Q}_{\tau})= \psi(\mat{P}_{\tau}) \cup \psi(\mat{Q}_{\tau}) 
\end{align*} 
\end{definition}

% \begin{definition}
% Let $\mat{P}\in \tconf$ and $N \subsetq E$, the deletion of the events $N$ from $\mat{P}$, denoted by $\mat{P}\setminus $ 
% \end{definition}

\begin{definition}
Let $\mat{R}_{\tau} \in \tconf$ and $x,y \in \evns$, the substitution of $x$ by $y$ in $\mat{R}_{\tau}$, denoted by $\mat{R}_{\tau}[y/x]$, is defined by induction on $\mat{R}_{\tau}$ as follows: 
%\hspace{-2cm}
\begin{align*}
    (_{E_{\tau}}[stop])[y/x]& =_{E_{\tau}[y/x]}[stop]   \\  (_{E_{\tau}}[skip\set{d}])[y/x]&=_{{E_{\tau}}[y/x]}[skip\set{d}]  \\ 
 (\delay{d} \mat{P}_{\tau})[y/x]& = \delay{d} (\mat{P}_{\tau}[y/x])  \\   (_{E_{\tau}}[a\set{d};P])[y/x]&=_{{E_{\tau}}[y/x]}[a\set{d};P] \\   
  (\mat{P}_{\tau} \ou \mat{Q}_{\tau})[y/x] &= \mat{P}_{\tau}[y/x] \ou\mat{Q}_{\tau}[y/x] 
     \\ (\mat{P}_{\tau} \setminus L) &=\mat{P}_{\tau}[y/x] \setminus L  \\  
 (\mat{P}_{\tau}|[L]|\mat{Q}_{\tau})[y/x] & = \mat{P}_{\tau}[y/x]|[L]|\mat{Q}_{\tau}[y/x] \\  (\mat{P}_{\tau} \inter \mat{Q}_{\tau})[y/x] &=\mat{P}_{\tau}[y/x] \inter \mat{Q}_{\tau}[y/x]  
\end{align*}
 where $E_{\tau}[y/x]$ is again  the obvious substitution over the  set of events. 
\end{definition}

Let $E_{\tau} \in 2^{\evns \times \actset \times \dom}$, we say that all the actions in  $E_{\tau}$ have finished and write $Finish(E_{\tau})$, if for all $x:a:t_x \in E_{\tau}$ we have that $t_x > \dr{a}$. Let $get : 2^{\evns} \flechep \evns$ be a function  satisfying $get(E)\in E$, $\forall E \in 2^{\evns}-\set{\emptyset}$.  \\

The timed transition over the timed causal configurations, denoted by $\flecheop\;  \subseteq \tconf \times Act_{\tau} \times \tconf$ where $Act_{\tau} = (2^{\evns \times \actset \times \dom} \times \actset \times  \evns ) \; \cup \; \dom$,  is defined as follows:

\noindent 0. \textbf{$Stop$ process}: 
$$
\frac{\neg Finish(E_{\tau})}
{ {}_{E_{\tau}}[stop]  \stackrel {d}{\flecheop} {{}_{E_{\tau}+d}[stop]}}
$$
I. \textbf{$Skip$ process}:

%\begin{Parallel}{3cm}{3cm}
%\ParallelLText{
$$\textrm{(I.a)}\; \frac{Finish(E_{\tau})}
{  {_{E_{\tau}}[skip\set{u}] }  \stackrel{ _{E_{\tau}}\delta_x}{\flecheop} { } _{\set{x:\delta:0}}[stop]} 
$$
%}
%\ParallelRText{
$$\textrm{(I.$\tau$)} \; \frac{Finish(E_{\tau})}
{  {_{E_{\tau}}[skip\set{d+d'}] }  \stackrel{d'}{\flecheop} { } _{E_{\tau}}[skip\set{d}]} 
$$
%}
%\end{Parallel}

\vspace{4mm}
\noindent II. \textbf{Prefix operator}:

%\begin{Parallel}{0.5\textwidth}{0.50\textwidth}
%\ParallelLText{
$$\rm{(II.a)}\;
\frac{ Finish(E_{\tau})}
{ _{E_{\tau}}[a\set{u}; P]  \stackrel{_{E_{\tau}}a_x}{\flecheop } \; {}_{\set{x:a:0}}[P] }
\quad x=get(\mat{\evns})
$$
%}
%\ParallelRText{
$$\textrm{(II.$\tau$)} \;  \frac{Finish(E_{\tau})}
{  {_{E_{\tau}}[a\set{d+d'}; P] }  \stackrel{d'}{\flecheop}  {}_{E_{\tau}}[a\set{d}; P]} 
$$
%}
%\end{Parallel}

\vspace{2mm}

\noindent III. \textbf{Choice operator}:
%\begin{Parallel}{0.50\textwidth}{0.50\textwidth}
%\ParallelLText{ 
$$
\textrm{(III.a)}\;
\frac
{\mat{P}_{\tau} \stackrel{_{E_{\tau}}a_x} {\flecheop}\mat{P}_{\tau}'}
{\mat{P}_{\tau} \ou \mat{Q}_{\tau} \stackrel{_{E_{\tau}}a_x} \flecheop \mat{P}_{\tau}' \hspace{15mm} 
  \mat{Q}_{\tau} \ou \mat{P}_{\tau} \stackrel{_{E_{\tau}}a_x} \flecheop \mat{P}_{\tau}' }
$$
%}
%\ParallelRText{ 
$$
\textrm{(III.$\tau$)}\;
\frac{
\mat{P}_{\tau} \stackrel{d}{\flecheop} \mat{P}_{\tau}' \hspace{10mm} \mat{Q}_{\tau} \stackrel{d}{\flecheop} \mat{Q}_{\tau}'
}
{ \mat{P}_{\tau} \ou \mat{Q}_{\tau} \stackrel{d} {\flecheop} \mat{P}_{\tau}'  \ou \mat{Q}_{\tau}' }
$$
%}
%\end{Parallel}
\vspace{3mm}
\noindent IV. \textbf{Parallel composition operator}:
\begin{align*}
%  \hspace{-20mm}
  & \textrm{(IV.$\tau$)}
\frac{
\mat{P}_{\tau} \stackrel{d}{\flecheop} \mat{P}_{\tau}' \hspace{8mm} \mat{Q}_{\tau} \stackrel{d}{\flecheop} \mat{Q}_{\tau}'
}
{ \mat{P}_{\tau} |[L]| \mat{Q}_{\tau} \stackrel{d} {\flecheop} \mat{P}_{\tau}'  |[L]| \mat{Q}_{\tau}' }
\\
& \textrm{(IV.a)}
\frac
 {\mat{P}_{\tau} \stackrel{_{E_{\tau}}a_x} {\flecheop} \mat{P}_{\tau}' \quad a \notin L\cup \set{\delta}}
 {\mat{P}_{\tau}|[L]|\mat{Q}_{\tau} \stackrel{_{E_{\tau}}a_y} {\flecheop}  \mat{P}_{\tau}' [y / x] |[L]| \mat{Q}_{\tau} }
\\
& \textrm{(IV.b)}
\frac
 {\mat{P}_{\tau} \stackrel{_{E_{\tau}}a_x} {\flecheop} \mat{P}_{\tau}' \quad a \notin L \cup \set{\delta}}
 {\mat{Q}_{\tau}|[L]|\mat{P}_{\tau} \stackrel{_{E_{\tau}}a_y} {\flecheop}  \mat{Q}_{\tau} |[L]| \mat{P}_{\tau}' [y / x] }
%y=get\Big(\evns  - \big(( \evn{\mat{P}_{\tau}} \cup \evn{\mat{Q}_{\tau} } )\big) -E_{\tau}\Big)
\end{align*}

where in the last two rules we have \\ $ y=get\Big(\mat{\evns} - \big(( \evn{\mat{Q}_{\tau}'} - \set{x} ) \cup \evn{\mat{P}_{\tau} } \big)\Big)$. To avoid any confusion with the definition of $\psi$ given in Definition \ref{encap:def},  here we consider that $\psi : \tconf \flechep 2^{\evns}$ but we still use the same symbol, the type  of $\psi$ is clarified by the context. 
 \begin{align*} 
\textrm{(IV.c)}\;
\frac
 {\mat{P}_{\tau} \stackrel{_{E_{\tau}}a_x} {\flecheop} \mat{P}_{\tau}'  \;\;\; \mat{Q}_{\tau} \stackrel{_{F_{\tau}}a_y} {\flecheop} \mat{Q}_{\tau}' \;\;\;\;\; a \in L\cup \set{\delta}}
 {\mat{P}_{\tau}|[L]|\mat{Q}_{\tau} \stackrel{_{E_{\tau}\cup F_{\tau}}a_z} {\flecheop}  \mat{P}_{\tau}' [z / x] |[L]| \mat{Q}_{\tau}' [z/y] } 
 \end{align*}
 $\;\;\; z= get\Big(\evns - \big[ \big( \evn{\mat{P}'} - \set{x}\big)  \cup \big(\evn{\mat{Q}'} - \set{y} \big)  \big] \Big)$

\noindent V. \textbf{Hide operator}:
$$
%1st hide
\textrm{(V.a)}\;
\frac
{\mat{P}_{\tau} \stackrel{_{E_{\tau}}a_x} {\flecheop} \mat{P}_{\tau}' \;\;\; a \notin L}
{\mat{P}_{\tau} \setminus L \stackrel{ _{E_{\tau}}a_x} {\flecheop} \mat{P}_{\tau}' \setminus L}
\hspace{10mm}%2nd hide
\textrm{(V.b)}\;
\frac
{\mat{P}_{\tau} \stackrel{_{E_{\tau}}a_x} {\flecheop} \mat{P}_{\tau}' \;\;\; a \in L}
{\mat{P}_{\tau} \setminus L \stackrel{_{E_{\tau}}i_x} {\flecheop} \mat{P}_{\tau}' \setminus L}
\hspace{10mm}%time in hide
$$
$$
\textrm{(V.$\tau$)}\;
\frac
{\mat{P}_{\tau} \stackrel{d} {\flecheop} \mat{P}_{\tau}' }
{\mat{P}_{\tau} \setminus L \stackrel{d} {\flecheop} \mat{P}_{\tau}' \setminus L}
$$

\noindent VI.  \textbf{Interruption operator}:
\begin{align*}
\hspace{4mm}
\textrm{(VI.a)}\;
\frac
 {\mat{P}_{\tau} \stackrel{_{E_{\tau}}a_x} {\flecheop} \mat{P}_{\tau}' \quad \quad a \neq \delta}
 {\mat{P}_{\tau} \inter \mat{Q}_{\tau} \stackrel{_{E_{\tau}}a_y} {\flecheop}  \mat{P}_{\tau}' [y / x] \inter \mat{Q}_{\tau}}
\end{align*}
$\hspace{3mm} y=get \big( \evns - \big[( \evn{\mat{P}_{\tau}' }-\set{x})  \cup \evn{\mat{Q}_{\tau}}\big]\big) 
$

\begin{align*}
\textrm{(VI.b)}\;
\frac
 {\mat{P}_{\tau} \stackrel{_{E_{\tau}}{\delta}_x} {\flecheop} \mat{P}_{\tau}' }
 {\mat{P}_{\tau} \inter \mat{Q}_{\tau} \stackrel{_{E_{\tau}}{\delta}_x} {\flecheop}  \mat{P}_{\tau}' }
\hspace{5mm}
\textrm{(VI.c)}\;
\frac
 {\mat{Q}_{\tau} \stackrel{_{E_{\tau}}a_x} {\flecheop} \mat{Q}_{\tau}' }
 {\mat{P}_{\tau} \inter \mat{Q}_{\tau} \stackrel{_{E_{\tau}}a_x} {\flecheop}  \mat{Q}_{\tau}'  } 
\end{align*}
\begin{align*}
\hspace{5mm}
\textrm{(VI.$\tau$)}\;
\frac{
\mat{P}_{\tau} \stackrel{d}{\flecheop} \mat{P}_{\tau}' \hspace{8mm} \mat{Q}_{\tau} \stackrel{d}{\flecheop} \mat{Q}_{\tau}'
}
{ \mat{P}_{\tau}\inter  \mat{Q}_{\tau} \stackrel{d} {\flecheop} \mat{P}_{\tau}'  \inter \mat{Q}_{\tau}' }
\end{align*}

\vspace{2mm}

\noindent VII. \textbf{Delay operator}:
$$
\textrm{(VII.$\tau$)}\;
\frac{}
{\delay{d+d'}\mat{P}_{\tau} \stackrel{d'}{\flecheop} \delay{d} \mat{P}_{\tau}} 
\hspace{15mm}
\textrm{(VII.$\tau'$)}\;
\frac {\mat{P}_{\tau} \stackrel{d}{\flecheop} \mat{P}_{\tau}'}
{\delay{0}\mat{P}_{\tau} \stackrel{d}{\flecheop}  \mat{P}_{\tau}'} 
$$
$$
\hspace{15mm}
\textrm{(VII.a)}\;
\frac{ \mat{P}_{\tau} \stackrel{{}_{E_{\tau}}a_x}{\flecheop} \mat{P}_{\tau}' }
{\delay{0}\mat{P}_{\tau} \stackrel{{}_{E_{\tau}}a_x}{\flecheop}  \mat{P}_{\tau}'}
$$

\noindent VIII. \textbf{Passage of time}:
$$
\frac{\neg Finish(E_{\tau}) \;\; and \quad \forall \epsilon\;\; 0\le \epsilon \le d \quad \neg Finish(E_{\tau}+ \epsilon)}
 {_{E_{\tau}}[P] \stackrel{d}{\flecheop}\; _{E_{\tau}+d}[P]}
$$

\begin{definition}
Let $\mat{R}_{\tau} \in \tconf$, the passage  of $d$ units of time over $\mat{R}_{\tau}$, denoted by $\mat{R}_{\tau}+d$, is defined by induction on $\mat{R}_{\tau }$ as follows: 
\begin{align*}
 _{E_{\tau}}[P]+d&=_{E_{\tau}+d}[P] \\
 (\mat{P}_{\tau}\ou \mat{Q}_{\tau})+d&=(\mat{P}_{\tau}+d)\ou (\mat{Q}_{\tau}+d) \\  (\mat{P}_{\tau}\setminus L)+d &= (\mat{P}_{\tau}+d)\setminus L  \\
 (\mat{P}_{\tau}|[L]| \mat{Q}_{\tau})+d &=(\mat{P}_{\tau}+d)|[L]| (\mat{Q}_{\tau}+d)
 \\
 (\mat{P}_{\tau}\inter \mat{Q}_{\tau})+d& =(\mat{P}_{\tau}+d) \inter (\mat{Q}_{\tau}+d)  
\end{align*}
where\\  
\begin{math}
%\[
\textrm{ } \;\;\;
  \begin{cases} 
\emptyset +d &= \emptyset,  \\
(x:a:t_x)+d &= x:a:t_{x}+d,  \\
(E_{\tau}\cup \set{x:a:t_{x}})+d&=(E_{\tau}+d)\cup \set{(x:a:t_{x})+d}.
  \end{cases}
%\]
\end{math}
\end{definition}

\begin{definition}
Given a duration-CSP process $P$, the operational semantics of $P$ over the class of the timed causal configurations $\tconf$,  denoted by $P^{op}$, consists in associating to $P$ the set of timed causal configurations generated by the relation $\flecheop \; \in  \tconf \times Act_{\tau} \times \tconf $, starting from the configuration $_{\emptyset}[P]$.
\end{definition}

% \longversion{
% \subsection{Timed causal bisimulation over the timed causal configurations}
% \begin{definition}\label{tccbis:def}
% Let $\tccbis \subseteq \tconf \times \tconf \times \mat{F}$ and  $(s_1,\mat{Q},f) \in \tccbis$. We say that  $\tccbis$ is a \myemph{timed causal bisimulation} iff:
% \begin{enumerate}[{1}.1.] 
% \item if $\mat{P} \stackrel{_Ea_x} {\flecheop} \mat{P}'$ then there exists $\mat{Q} \stackrel{_Fa_y} {\flecheop} \mat{Q}'$ such that 
% \begin{enumerate}[{i.}]
% \item  $u\in E$ if and only if  $f(u)\in F$, and
% \item $(\mat{P}',\mat{Q}',f') \in  \tccbis $ where $f':= (f_{\pi_1(\psi(\mat{P}')-x )})_{\pi_2(\psi(\mat{Q}')-y )} \; \cup \set{(x,y)}$.
% \end{enumerate}
% \item if $\mat{P} \stackrel{d} {\flecheop} \mat{P}$ then $\mat{Q} \stackrel{d} {\flecheop} \mat{Q}$.
%  \end{enumerate}
% %symmetrico
% \begin{enumerate}[{2}.1.] 
% \item if $\mat{Q} \stackrel{_Fa_y} {\flecheop} \mat{Q}'$ then there exists $\mat{P} \stackrel{_Ea_x} {\flecheop} \mat{P}'$ such that 
% \begin{enumerate}[{i.}]
% \item  $u\in E $ if and only if  $f(u)\in F$, and
% \item $(\mat{P}',\mat{Q}',f') \in  \tccbis $ where $f':= (f_{\pi_1(\psi(\mat{P}')-x )})_{\pi_2(\psi(\mat{Q}')-y )} \;\cup \set{(x,y)}$.
% \end{enumerate}
% \item if $\mat{Q} \stackrel{d} {\flecheop} \mat{Q}$ then $\mat{P} \stackrel{d} {\flecheop} \mat{P}$.
%  \end{enumerate}
% We say that two timed-CSP processes $P$ and  $Q$ are bisimilar, and write $P \sim Q$, if there exists a timed causal bisimulation $\tccbis$ such that $(_{\emptyset}[P], {}_{\emptyset}[Q]) \in \tccbis$.   
% \end{definition}
% }

\section{A denotational  semantics}\label{Den:sec}
In this section we describe how to generate a timed-CTS (see Definition \ref{T-CTS:def}) from  a duration-CSP  specification.  To this goal, we shall define  the timed causal transition relation $\fleche  \; \subseteq\st  \times trs \times \st$, where $\st$ is defined exactly as the set of the timed configurations $\tconf$ given in  Definition \ref{timed-causal-config:def}, apart that $E_{\tau} \in 2^{\evns \times \actset}$ instead of $E_{\tau} \in 2^{\evns \times \actset  \times \dom}$ and hence $E_{\tau}$  will be denoted by $E$; and the timed  transition $trs  \in (2^{\evns \times \actset }  \times \actset \times \evns ) \times 2^{\varphi} \times 2^{\clk}$. We recall that $2^{\varphi}$ is the set of timed constraints. 
\begin{enumerate}
\item \textbf{Skip  process}:
% \begin{Parallel}{0.50\textwidth}{0.50\textwidth}
%\ParallelLText{
\begin{align*}
% \hspace{-2.8cm}
&(1.a) 
 \frac{}{{}_{\emptyset}[skip \set{u}] \stackrel {\langle {}_{\emptyset}\delta_{x}, \; 0\le c_x\le u , \; c_x  \rangle}{\longrightarrow} {}_{\set{x:\delta}}[stop]} \; x=get(\evns)\\
%}%end of skip a 
%\ParallelRText{
& (1.b)
\frac{E \neq \emptyset}{{}_{E}[skip \set{u}] \stackrel {\langle {}_{E}\delta_{x},\; \fin{\le u}{E},\;c_x \rangle}{\longrightarrow} {}_{\set{x:\delta}}[stop]} \; x=get(\evns)
\end{align*}
%}
%\end{Parallel} 

\item \textbf{Prefix operator}: \\
% \begin{Parallel}{0.50\textwidth}{0.50\textwidth}
%\ParallelLText{
$$ %\hspace{-3.3cm}
 (2.a) \frac{}
{{}_{\emptyset}[a \set{u}; P] \stackrel {\langle {}_{\emptyset} a_{x}, \;  0\le c_x\le u, \; c_x  \rangle}{\longrightarrow} {}_{\set{x:a}}[P]} 
\; x=get(\evns)
$$
%}
%\ParallelRText{
 $$
(2.b) \frac{E \neq \emptyset}
{{}_{E}[a \set{u}; P] \stackrel {\langle {}\; _{E} a_{x}, \; \fin{\le u}{E}, \; c_x  \rangle}{\longrightarrow} {}_{\set{x:a}}[P]} 
\; x=get(\evns)
$$
%}
%\end{Parallel}

\item \textbf{Choice operator}:
%\begin{Parallel}{0.50\textwidth}{0.50\textwidth}
%\ParallelLText{
 $$ (3.a) \frac
{ \mat{P} \stackrel{\langle trs \rangle}{{\longrightarrow}} \mat{P}' }
{\mat{P} \; \ou \; \mat{Q}  \stackrel{\langle trs \rangle}{{\longrightarrow}} \mat{P}'}
%}
%\ParallelRText{
\;\;\;\;\;\;\;\;\;\;\;
  (3.b) \frac
{ \mat{Q} \stackrel{\langle trs \rangle}{{\longrightarrow}} \mat{Q}' }
{\mat{P} \; \ou \; \mat{Q}  \stackrel{\langle trs \rangle}{{\longrightarrow}} \mat{Q}'}$$
%}
%\end{Parallel}

\item \textbf{Parallel composition operator}:

 $$ \hspace{-15mm}
(4.a)\;
\frac
  {\mat{P} \stackrel{\langle _{E}a_x,\varphi,\lambda \rangle} {\fleche} \mat{P}' \;\; a \notin L\cup \set{\delta}}
 {\mat{P}\;|[L]|\mat{Q} \stackrel{\langle _{E}a_y,\varphi[c_y/c_x], \lambda[c_y/c_x]\rangle} {\fleche}  \mat{P}' [y / x] \;|[L]|\; \mat{Q} }
$$
$ y=get\Big(\evns - \big(( \evn{\mat{P}'} - \set{x} ) \cup \evn{\mat{Q} } \big)\Big)
$

$$\hspace{-15mm}
(4.b)\;
\frac
 {\mat{Q} \stackrel{\langle _{E}a_x,\varphi,\lambda \rangle} {\fleche} \mat{Q}' \;\; a \notin L \cup \set{\delta}}
 {\mat{P}|[L]|\mat{Q} \stackrel{\langle _{E}a_y,\varphi[c_y/c_x], \lambda[c_y/c_x]\rangle} {\fleche}  \mat{P} \; |[L]|\; \mat{Q}' [y / x] }
$$
$  y=get\Big(\evns - \big(( \evn{\mat{Q}'} - \set{x} ) \cup \evn{\mat{P} } \big)\Big)
$

$$
 % \hspace{-40mm} 
(4.c)\;
\frac
 {\mat{P} \stackrel{\langle _{E}a_x,\varphi_1,\lambda_1 \rangle} {\fleche} \mat{P}'  \;\;\; \mat{Q} \stackrel{\langle _{F}a_y,\varphi_2,\lambda_2 \rangle} {\fleche} \mat{Q}' \;\;\; a \in L\cup \set{\delta}}
 {\mat{P}\;|[L]|\;\mat{Q} \stackrel{\langle _{E\cup F}a_z ,\; \Omega,\; \Gamma \rangle} {\fleche}  \mat{P}' [z / x] \;|[L]| \;\mat{Q}' [z/y] } 
 $$
$  z= get\Big(\evns - \big[ \big( \evn{\mat{P}'} - \set{x}\big)  \cup \big(\evn{\mat{Q}'} - \set{y} \big)  \big] \Big)
$
\\
$
\Omega= \varphi_1[c_z/c_x] \; \myet \; \varphi_2[c_z/c_y]
$
\\
$
\Gamma=\lambda_1[c_z/c_x] \;\cup\; \lambda_2[c_z/c_y]
$

\item \textbf{Hide operator}:
$$
\hspace{-4mm}
(5.a)\;
\frac
{\mat{P} \stackrel{\langle   _{E}a_x,\varphi,\lambda \rangle} {\fleche} \mat{P}' \;\;\;\; a \notin L}
{\mat{P} \setminus L \stackrel{\langle _{E}a_x,\varphi,\lambda \rangle} {\fleche} \mat{P}' \setminus L}
\hspace{3mm}%2nd hide
(5.b)\;
\frac
{\mat{P} \stackrel{\langle _{E}a_x,\varphi,\lambda \rangle} {\fleche} \mat{P}' \;\;\;\; a \in L}
{\mat{P} \setminus L \stackrel{\langle _{E}i_x,\varphi,\lambda \rangle} {\fleche} \mat{P}' \setminus L}
$$

\item \textbf{Interruption operator}:

 $$ %\hspace{-20mm}
(6.a) \;
\frac
 {\mat{P} \stackrel{\langle _{E}a_x, \varphi, \lambda \rangle} {\fleche} \mat{P}' \;\;\;\; a \neq \delta}
 {\mat{P} \inter \mat{Q} \stackrel{\langle _{E}a_y, \varphi[c_y/c_x], \lambda[c_y/c_x] \rangle} {\fleche}  \mat{P}' [y / x] \inter \mat{Q}} 
$$
$\hspace{22mm}y=get(\evns - (( \evn{\mat{P}'} - \set{x} ) \cup \evn{\mat{Q} } ))
$

%\begin{Parallel}{0.50\textwidth}{0.50\textwidth}
%\ParallelLText{
$$%\hspace{-15mm}
(6.b)\;
\frac
 {\mat{P} \stackrel{\langle _{E}\delta_x, \varphi, \lambda \rangle} {\fleche} \mat{P}' }
 {\mat{P} \inter \mat{Q} \stackrel{\langle _{E}\delta_x, \varphi, \lambda\rangle} {\fleche}  \mat{P}'  }
\;\;\;\;
%}
%\ParallelRText{
  (6.c)\;
\frac
 {\mat{Q} \stackrel{\langle _{E}a_x, \varphi,\lambda \rangle } {\fleche} \mat{Q}' }
 {\mat{P} \inter \mat{Q} \stackrel{ \langle _{E}a_x, \varphi, \lambda \rangle } {\fleche}  \mat{Q}'  } 
$$
%}
%\end{Parallel}

\item \textbf{Delay operator}:
$$
\frac
{\mat{P} \stackrel{\langle {}_{E}a_x , \; \varphi ,\; \lambda \rangle}{{\longrightarrow}} \mat{P}'} 
{\delay{d}\; \mat{P} \stackrel{\langle {}_E a_x, \;\varphi + d ,\; \lambda \rangle}{\longrightarrow} \mat{P}'}
$$
\end{enumerate}

The substitutions $\varphi[c_z/c_x]$ and $\lambda[c_z/c_x]$ as well as the union $\lambda_1  \cup \lambda_2$  are defined  in the most obvious way.  Now we define the function $\mat{F}^{\le u}$. Intuitively, the timed constraint $\fin{\le u}{E}$ of a given transition $t$ expresses that all the actions  in  $E$ must terminate and the transition $t$ can happen in the  time interval $[0,u]$ counting from the termination moment of  \emph{the last finished action(s) of $E$}, i.e. :  \label{F:eq:page}  
\begin{align}\label{F:eq}
\fin{\le u}{E}&= \myetb_{x:a \; \in E}\Big(\dr{a}\le c_x\Big)  \myet \myoub_{x:a \; \in E}\Big(c_x\le \dr{a}+u\Big) %\\
%\fint{\le u}{d}{M}&= \myetb_{x:a \; \in E}\Big(\dr{a}+ d \le c_x\Big)  \myet \myoub_{x:a \; \in E}\Big( \dr{a}+d \le c_x\le \dr{a}+u+d\Big)
\end{align}
\begin{definition}\label{delay:fun:def}
The  delay  function $\varphi +  d$ is defined by induction on $\varphi$ as  follows:
\begin{align*}
 (\varphi_1\myet \varphi_2)+ d  &=(\varphi_1+ d) \myet (\varphi_2 + d)  \\
 (\varphi_1\myou \varphi_2)+ d  &=(\varphi_1+ d) \myou (\varphi_2 + d)  \\
 (\alpha \le c_x) + d&=   \alpha+d \le c_x \\
 ( c_x \le \beta)  + d &= c_x \le \beta +d  
\end{align*}
\end{definition}

\begin{remark}
By construction (i.e. by the construction of the timed constraints in the  rules $(1.a)$, $(1.b)$, $(2.a)$, $(2.b)$, $(4.c)$, and $7$), the timed constraints have the following form: 
\begin{align*}
\varphi &= \phi_1 \myet \cdots \myet \phi_n \\
\phi_i &= \myetb_{x:a \; \in E}\big(\alpha  \le c_x\big)  \myet \myoub_{x:a \; \in E}\big(  c_x\le \beta) \;\;\; \textrm{ where }\\ & \hspace{34mm} \alpha,\beta \in \dom \tand \alpha \le \beta.
\end{align*}
\end{remark}

We state one of the most properties of the function $\fin{\le u}{ . } + d$:
\begin{lemma}\label{fin:lemma}
Let $s_1 \stackrel{\langle _E{b}_x, \; \fin{\le u }{E}+d, \; c_x \rangle }{\fleche} s_2$ be a timed transition of a given timed-CTS. The action $b$ is enabled in the timed interval $[\tau+d,\tau+d+u]$ where $\tau \in \dom$ is the time stamp of the  termination of the last finished action(s) in $E$.   
\end{lemma}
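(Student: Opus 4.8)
The plan is to unfold $\fin{\le u}{E}+d$ into an explicit constraint on the clocks $c_x$, $x:a\in E$, and then translate that constraint into absolute time, using the standing convention that $c_x$ records the time elapsed since the start of the action named by the event $x$.

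First I would compute, using equation (\ref{F:eq}) together with Definition \ref{delay:fun:def} (the delay distributes over $\myet$ and $\myou$ and shifts only the constants that bound each clock from below),
$$
\fin{\le u}{E}+d \;=\; \myetb_{x:a \; \in E}\big(\dr{a}+d\le c_x\big)\;\myet\;\myoub_{x:a \; \in E}\big(c_x\le \dr{a}+u+d\big).
$$
Next I would record the bookkeeping fact underlying the construction of Section \ref{Den:sec}: a clock $c_x$ is reset to $0$ exactly once, namely on the transition that introduces the event $x$ --- i.e.\ the transition that starts the action $a$ with $x:a$ --- and is never reset afterwards (the reset set $\Lambda$ of each rule contains only the freshly created clock, and the renamings $[c_y/c_x]$ in the parallel, hiding and interrupt rules preserve this invariant). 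Hence if that action terminates at absolute instant $\tau_x$, then at any later absolute instant $t$ one has $c_x = t-(\tau_x-\dr{a})$, equivalently $c_x-\dr{a}=t-\tau_x$.

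Substituting this identity: the conjunct $\dr{a}+d\le c_x$ becomes $t\ge\tau_x+d$, so the whole conjunction holds iff $t\ge\max_{x:a\in E}\tau_x+d=\tau+d$, where $\tau$ is by definition the termination instant of the last finished action(s) of $E$; and the disjunct $c_x\le\dr{a}+u+d$ becomes $t\le\tau_x+u+d$, so the disjunction holds iff $t\le\max_{x:a\in E}(\tau_x+u+d)=\tau+u+d$, since the maximum of $\tau_x+u+d$ over $x:a\in E$ is attained at the same $x$ that attains $\max\tau_x$. Conjoining the two halves, $\fin{\le u}{E}+d$ is satisfied at absolute instant $t$ iff $t\in[\tau+d,\tau+d+u]$, which is exactly the assertion that $b$ is enabled on that interval.

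The only delicate point --- and the one I would spell out in detail --- is the clock-bookkeeping step: one must check, by induction on the derivation of the transition through the operators of Section \ref{Den:sec}, that $c_x$ genuinely records time-since-start-of-$a$, so that the substitution $c_x-\dr{a}=t-\tau_x$ is legitimate and the passage from clock constraints to absolute-time bounds is faithful. Granting that, the remaining interval arithmetic is routine.
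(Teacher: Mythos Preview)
Your argument is correct and matches the paper's: unfold $\fin{\le u}{E}+d$ into the conjunction $\Phi_1=\bigwedge_{x:a\in E}(\dr{a}+d\le c_x)$ and the disjunction $\Phi_2=\bigvee_{x:a\in E}(c_x\le\dr{a}+u+d)$, then translate each into an absolute-time bound via $c_x-\dr{a}=t-\tau_x$ to obtain $[\tau+d,\tau+d+u]$. Your write-up is more explicit than the paper's about the clock-bookkeeping invariant (the paper leaves it tacit); the small verbal slip in your parenthetical---Definition~\ref{delay:fun:def} shifts \emph{both} the lower and the upper bound constants, not only the lower ones---is harmless since the displayed formula you actually compute is correct.
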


\begin{definition}\label{densem:def}
Given a duration-CSP process $P$, the denotational semantics of $P$ over the class of timed-CTS, denoted by $\sem{P}$, consists in associating to $P$ the  timed-CTS  which  is generated by the  transition relation $\fleche \; \in \st  \times Act \times \st$  given in Section \ref{Den:sec},  starting from  the configuration  $_{\emptyset}[P]$.
 \end{definition}

\paragraph{Equivalence of the operational and denotational semantics} ~\\
 We arrive at the final point of this section: we  prove that the two semantics   are equivalent. The notion of equivalence is  formalized through the notion of $\tau$-bisimulation.  \\ 
Let $f: A \rTo B$ and let $A' \subseteq A$ and $B' \subseteq B$. The parametrized restrictions of $f$ w.r.t.  its domain and co-domain are defined respectively as follows:
\begin{align*}  
&  f_{\pi_1(A')} := \set{(a,b) \;|\; a \in A'} \hspace{1mm} \textrm{ and }
&   f_{\pi_2(B')} := \set{(a,b) \;|\;b \in B'}
\end{align*}

\begin{definition}\label{hsim:def}
A $\tau$-bisimulation linking the states of a timed-CTS and the timed causal configurations of $\tconf$ is a binary relation $\myequiv$ that comes with an events' bijection $f: \evns \flechep \evns$, and satisfying the following conditions:
\begin{enumerate}[{1}.1.] 
\item if $\aconf{s}{\nu} \stackrel{_Ea_x} {\fleche} \aconf{s'}{\nu'}$ then there exists $\mat{P}_{\tau} \stackrel{_{F_{\tau}}a_y} {\flecheop} \mat{P}_{\tau}'$ such that 
\begin{enumerate}[{i.}]
\item  $z:b \in E$ if and only if  $f(z):b:t \in F_{\tau}$,  for some $t\in \dom$, and
\item $(\aconf{s'}{\nu'},\mat{P}_{\tau}')_{f'} \in  \myequiv $ where \\ $\textrm{ }\hspace{12mm}f':= (f_{\pi_1(\psi(s')-x )})_{\pi_2(\psi(\mat{P}_{\tau}')-y )} \; \cup \set{(x,y)}$.
\end{enumerate}
\item if $\aconf{s}{\nu} \stackrel{d} {\fleche} \aconf{s}{\nu'}$ then $\mat{P}_{\tau} \stackrel{d} {\flecheop} \mat{P}_{\tau}'$ and $(\aconf{s}{\nu'},\mat{P}_{\tau}')_{f} \in \myequiv$.
 \end{enumerate}
%symmetrico
\begin{enumerate}[{2}.1.] 
\item if $\mat{P}_{\tau} \stackrel{ _{F_{\tau}} a_y} {\flecheop} \mat{P}_{\tau}'$ then there exists $\aconf{s}{\nu} \stackrel{_Ea_x} {\fleche} \aconf{s'}{\nu'}$ such that 
\begin{enumerate}[{i.}]
\item  $z:b \in E $ if and only if  $f(z):b:t \in F_{\tau}$, for some $t\in \dom$, and
\item $(\aconf{s'}{\nu'},\mat{P}_{\tau}')_{f'} \in  \myequiv $ where \\ $\textrm{ }\; \hspace{12mm}f':= (f_{\pi_1(\psi(s')-x )})_{\pi_2(\psi(\mat{P}_{\tau}')-y )} \;\cup \set{(x,y)}$.
\end{enumerate}
\item if $\mat{P}_{\tau} \stackrel{d} {\flecheop} \mat{P}_{\tau}'$ then $\aconf{s}{\nu} \stackrel{d} {\fleche} \aconf{s}{\nu'}$ and $(\aconf{s}{\nu'}, \mat{P}_{\tau}')_{f'} \in \; \myequiv$.
 \end{enumerate}
A timed-CTS and a set of timed causal configuration are $\tau$-bisimilar iff there exists a $\tau$-bisimulation containing their initial configurations.
\end{definition}

\begin{theorem}\label{main:theorem}
The operational and the denotational semantics $( . )^{op} $ and $\sem{ . }$  are equivalent, i.e. for each duration-CSP process $P$ there exists a $\tau$-bisimulation $\myequiv$  such that  $(\sem{P},\;  P^{op})   \in  \myequiv $.
\end{theorem}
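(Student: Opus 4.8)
The plan is to exhibit an explicit candidate $\tau$-bisimulation and verify the four transfer conditions of Definition~\ref{hsim:def} by structural induction on the canonical form of configurations (Lemma~\ref{canonical:lemma}). The natural candidate relates a state $s$ of $\sem{P}$ — which is syntactically a configuration $\mat{R}$ with events carrying only action labels, $E\in 2^{\evns\times\actset}$ — together with a clock assignment $\nu$, to the operational configuration $\mat{R}_\tau$ obtained by decorating each event $x{:}a$ occurring in $s$ with the elapsed-time stamp $t_x:=\nu(c_x)$; the events' bijection $f$ is taken to be the identity on the common event set, adjusted along transitions exactly as the formula $f'$ prescribes. So I would define $\myequiv$ to be the set of all pairs $\big(\aconf{s}{\nu},\,\mat{R}_\tau\big)$ where $\mat{R}_\tau$ is the $\nu$-decoration of $s$ and $s$ is reachable from $_{\emptyset}[P]$; the initial pair $(\aconf{_\emptyset[P]}{\zero},\,_\emptyset[P])$ is clearly in it (empty event set, no stamps). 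The crux is then a correspondence lemma between the two transition relations under this decoration.

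First I would establish, by induction on the derivation of a transition in the denotational system of Section~\ref{Den:sec}, that whenever $\aconf{s}{\nu}\stackrel{_Ea_x}{\fleche}\aconf{s'}{\nu'}$ — meaning $s\stackrel{\langle _Ea_x,\varphi,\lambda\rangle}{\fleche}s'$ in the timed-CTS with $\nu\models\varphi$, $\nu'=[\lambda\mapsto 0]\nu$, and all events in $E$ terminated — the $\nu$-decoration $\mat{R}_\tau$ of $s$ admits a matching operational step $\mat{R}_\tau\stackrel{_{F_\tau}a_y}{\flecheop}\mat{R}'_\tau$ with $\mat{R}'_\tau$ the $\nu'$-decoration of $s'$, $y=x$, and $F_\tau$ the decoration of $E$. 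The base cases are the prefix and skip rules: here one checks that the guard $\varphi$ built by rules $(2.a)/(2.b)$ (resp. $(1.a)/(1.b)$), which after possible $\delay{d}$-shifts has the shape $\fin{\le u}{E}+d$ from Lemma~\ref{fin:lemma}, is satisfied by $\nu$ exactly when $Finish(E_\tau)$ holds and the residual duration bound $a\set{u'}$ in the operational configuration (obtained from the passage-of-time rules II.$\tau$/VII.$\tau$) still permits the action — i.e. the arithmetic of "clock value $c_x$ vs. residual budget $u'$" matches "stamp $t_x$ vs. $\dr{a}$, $\dr{a}+u$". The inductive cases for $+$, $|[L]|$, $\setminus L$, $\inter$, $\delay{d}$ are a straightforward matching of the congruence rules $(3)$–$(7)$ with III.a–VII.a; the only subtlety is that the fresh-event choices via $get(\cdot)$ use the same argument set in both systems (one must check $\evn{\mat{P}'}$ computed with or without stamps is the same set of events, which is immediate from Definition~\ref{encap:def}), so $y$ is chosen identically and the composed bijection $f'$ coincides on both sides. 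For delay transitions the argument is easier: $\aconf{s}{\nu}\stackrel{d}{\fleche}\aconf{s}{\nu+d}$ always holds, and I must show the decoration of $s$ can perform $\stackrel{d}{\flecheop}$ to the decoration of $(s,\nu+d)$ — this follows because a $\flecheop$ delay step is possible iff no event stamp crosses its $\dr{a}$ threshold strictly (rule VIII, and the $Finish$ side-conditions of rules 0, I.$\tau$, II.$\tau$), but the denotational side has no such enabledness guard on delay, so I need the invariant that reachable decorations never have a "stuck" clock — this is maintained because the termination information is carried in the guard $\fin{\le u}{E}$ of the \emph{next} action transition, not in time passage, and $Finish$ only blocks delay in $_{E_\tau}[stop]$ and similar terminal configurations where indeed no denotational delay changes the abstract state either. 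Then I would prove the symmetric direction (every operational step is matched by a denotational one under the decoration), which is entirely dual.

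Given these two correspondence lemmas, conditions 1.1, 1.2, 2.1, 2.2 of Definition~\ref{hsim:def} hold for $\myequiv$ essentially by unfolding: condition (i) ($z{:}b\in E \iff f(z){:}b{:}t\in F_\tau$) is exactly the statement that $F_\tau$ is the decoration of $E$ and $f$ the identity; condition (ii) is that the resulting pair of residuals is again a decoration pair, with the bijection updated to $f'$, which is what the correspondence lemma delivers. The main obstacle I anticipate is the bookkeeping around the $get$-based renamings in the parallel, interruption and hiding rules: one must check carefully that the event chosen fresh on the denotational side equals the one chosen on the operational side \emph{and} that the accumulated bijection $f'$ — built by the restriction-and-extend formula $(f_{\pi_1(\psi(s')-x)})_{\pi_2(\psi(\mat{P}'_\tau)-y)}\cup\{(x,y)\}$ — is compatible with the decoration relation after substitution, i.e. that $(\mat{R}'[y/x])_\tau$ decorated by $\nu'$ agrees with the decoration of $s'$; this requires a small auxiliary lemma that decoration commutes with the substitution $[y/x]$ and with $+d$ (which is immediate from the parallel inductive definitions of $[y/x]$, $\mat{R}_\tau+d$, and $\varphi+d$, Definitions given in Sections~\ref{D-CSP:sec}–\ref{Den:sec}). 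Everything else is routine rule-by-rule verification, so I would relegate it to the Appendix and present only the prefix/skip base case and one representative parallel-composition case in detail.
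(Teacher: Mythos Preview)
Your approach is essentially the paper's: it too takes the events' bijection $f$ to be the identity and builds $\myequiv$ from pairs satisfying exactly your decoration condition (stated there as the invariant \textsc{Synch1}, namely $z{:}b{:}t_z \in \psi(\mat{R}_\tau)$ iff $z{:}b \in \psi(\mat{R})$ and $t_z = \nu(c_z)$, together with \textsc{Synch2} that the underlying processes agree under a forgetful map $\for{\,\cdot\,}$), and then verifies the bisimulation clauses by structural induction on the canonical forms of Lemma~\ref{canonical:lemma}. The only difference is organisational: the paper stratifies the relation as $\myequiv=\bigcup_n(\myequiv_n\cup\hat\myequiv_n)$ by action-step depth and proves each ``unmatched'' set $\hat\myequiv_n$ is empty, whereas you define the relation in one shot and argue via two correspondence lemmas---the per-operator case analysis (prefix/skip as base, one parallel case in detail, the rest routine) and the delay-matching subtlety you flag are treated in the paper at the same level of detail you propose.
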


\section{Simple case study}
As a simple application we illustrate the use of duration-CSP through a simplified version of the Tick-Tock protocol \cite{tick-tock}, the latter has been used for the assessment of timed formal description techniques. \\
The tick-Tock case contains three entities called \emph{sender, receiver} and \emph{service}, see Figure \ref{sketch:fig}. Moreover, \emph{service} interacts with \emph{sender} and \emph{receiver} through their \texttt{SAPs} \texttt{Ss-SAP} and \texttt{Sr-SAP}, respectively. In the sequel we restrict ourselves to the specification of the \emph{service}. The description of the \emph{service} is as follows. \emph{service} transmits data from \emph{sender} to receiver. The exchanges are performed thought the corresponding \texttt{SAPs} in an atomic way and carried out a data called the \emph{cell}. \emph{Service} must satisfies the following requirements:\\
\begin{figure}[h]
\centering
\includegraphics[width=0.36\textwidth,height=0.26\textwidth]{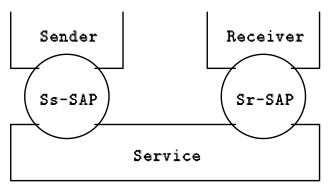}
\caption{The protocol.}
\label{sketch:fig}
\end{figure}

\textbf{Frequency.} A cell form \emph{sender} is only accepted from service at precise, punctual instants within  a period of $\pi$ units of time.\\
\textbf{Transmission delay.} \emph{Service} provides a cell to \emph{receiver} between $\tau_{min}$ and $\tau_{max}$ units of time after its emission.  \\
\textbf{Spacing between deliveries.} There is a delay of at least $\delta$ units of times between two consecutive offers of cells at \texttt{Sr-SAP}.\\
\textbf{Immediate acceptance.} A cell offered by \emph{service} to \emph{receiver} must be immediately accepted by \emph{receiver}, otherwise the service loses the cell immediately.\\
\textbf{Loss free transmission.} No cell is lost during its transmission through \emph{service}.
\subsection{Specification of \emph{service} with duration-CSP}
The specification of \emph{service} is given in such a way each timed requirement is given as a duration-CSP process. %Figure \ref{} illustrates the internal structure of \emph{service}.  

It is composed of three processes: \texttt{Frequency, Medium} and \texttt{ImmAccept}. \\
\textbf{Frequency.} The frequency behaviour of \emph{service} is:\\
\texttt{process Frequency[Ss-SAP]:=} \\
 \hspace{2mm}\texttt{Ss-SAP\{0\}; $\delay{\pi}$Frequency[Ss-SAP] + $\delay{\pi}$Frequency[Ss-SAP] endproc }\\

\noindent \textbf{Medium}. The \texttt{Medium} must satisfy both the transmission delay and spacing between deliveries requirements :\\
\texttt{process Medium[Ss-SAP,Del] := }\\
 \texttt{(Ss-SAP; TRANS; Del; Stop ||| Medium[Ss-SAP,Del] )} \\
 \hspace{3mm} \texttt {|[Del]|} \\
 \texttt{ Del; $\delay{\delta}$ Medium [Ss-SAP,Del] }\\
\texttt{endproc}\\

\noindent\textbf{Immediate acceptance.}  This requirement is specified as follows:\\
\texttt{process ImmAccept[Del,Sr-SAP]:= Del;}\\
\texttt{( Sr-SAP\{0\}; ImmAccept[Del,Sr-SAP])  + \\
  \hspace{3mm} ImmAccept[Del,Sr-SAP]}  \texttt{endproc}\\

\noindent\textbf{Service.} The three above processes have to synchronize on the internal action \texttt{Del}. Since  \texttt{Del} is an internal action, it must be hidden. The behaviour of teh process \texttt{Service} is as follows:\\
\begin{verbatim} process Service[Ss-Sap]:=
  (Frequency[Ss-SAP]  |[Ss-SAP]|
     ( Medium[Ss-SAP,Del] |[Del]| ImmAccept[Del,Sr-SAP]) 
  ) \{Del} 
  endproc
  \end{verbatim}

We note that all the actions are atomic apart the action \texttt{TRANS} we denotes the transmission delay. Therefore the duration of \texttt{TRANS} should belong to the interval $[\tau_{min},\tau_{max}]$.  As a matter of fact it is not hard to change the semantics of language by considering the actions to be of a variable duration instead of a fixed one. Finally we point out that one of the  interesting features of  duration-CSP - with its timed causal semantics- is that it allows the refinement of a given action, notably the action \texttt{TRANS} in this example,  into a more complicated process which allows an incremental design of the system. The refinement operator as well as its semantics and properties are discussed in the following section.

\section{Action refinement in duration-CSP}\label{ref:sec}
One of the interesting steps during the hierarchical design of complex systems is the refinement of an action $a$ into a process. As a matter of fact,   one can   associate to each specification a  level of abstraction basing on  the details of the actions with compose the specification.    For instance, given a  specification $E$ of abstraction level $N$,  the refinement $\reff{a}{P}{E}$ of an action $a$ by a process $P$ in the specification $E$ means that when passing from the abstraction level $N$ to $N+1$ the refinement operator will exhibits the internal structure of the action $a$,  that is,  $a$ would be replaced by the process $P$ at the level $N+1$.  There have been many earlier works  to curry on action refinement in process algebra, let us mention \cite{CourtiatS93,Saidouni94syntacticaction,FecherMW02,e-lotos:ref}.\\
In this section we enrich the language duration-CSP with the  refinement operator $\rho$. The new language is called $duration-CSP_{\rho}$, afterwards, we give the timed causal semantics of this language, notably, the semantics of the refinement operator. Finally we prove that the refinement operator preserves the timed causal bisimulation.  \\
 \textbf{The syntax of duration-CSP$_{\rho}$ } is given as follows:
\begin{itemize}
\item if $P$ is a duration-CSP process then $P$ is again a duration-CSP$_{\rho}$ process,
\item if $a$ is an action, $P$ is a duration-CSP process and $Q$ is a duration-CSP$_\rho$  process, then $\reff{a}{P}{Q}$ is a duration-CSP$_{\rho}$ process.
\end{itemize}
In order to define the timed causal semantics of the refinement operator $\rho$, we introduce a new kind of operator on the timed causal configurations $\tconf$,  called \emph{partial sequencing operator} and denoted by $\gg^{x}$.  Intuitively, the semantics of $\mat{P}_{\tau} \gg^{x} \mat{Q}_{\tau}$  means that all the actions of $\mat{Q}_{\tau}$ which do not depend on the termination  of the event $x$ are in concurrence with the actions of $\mat{P}_{\tau}$, however the execution of the remaining actions of $\mat{Q}_{\tau}$   must wait for  the successful termination of $\mat{P}_{\tau}$.  Besides the distributivity of the event names over the basic duration-CSP operators, we assume that the event names distribute over the refinement operator, i.e. for every $E_{\tau} \in 2^{\evns \times \mat{L}\times \mathbb{R}^{+}}$ and every process $\reff{a}{P}{Q}$,
$$
{}_{E_{\tau}}[\reff{a}{P}{Q}] \equiv \reff{a}{P}{{}_{E_{\tau}}[Q]}
$$  
Again we can extend Lemma \ref{canonical:lemma} to obtain:
\begin{lemma}\label{canonical:lemma:rho}
Every canonical  timed causal configuration  has one of the following forms:
\begin{align*}
&  _{E_{\tau}}[stop]   \hspace{3.5mm} _{E_{\tau}}[skip\set{d}]  \hspace{3.5mm} \delay{d} \mat{P}_{\tau}  \hspace{3.5mm} _{E_{\tau}}[a\set{d};P] \hspace{3.5mm} \mat{P}_{\tau} \ou \mat{Q}_{\tau}\\ &   \mat{P}_{\tau}|[L]|\mat{Q}_{\tau}  \hspace{3.5mm} \mat{P}_{\tau} \setminus L  \hspace{4.3mm} \mat{P}_{\tau} \inter \mat{Q}_{\tau}  \hspace{4.3mm} \reff{a}{P}{\mat{Q}_{\tau}} \hspace{4mm} \mat{P}_{\tau} \gg^{x} \mat{Q}_{\tau} 
\end{align*}
where $\mat{P}_{\tau}$ and $\mat{Q}_{\tau}$ are in the canonical  form. 
\end{lemma}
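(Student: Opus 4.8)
The plan is to extend, essentially verbatim, the structural-induction argument that establishes Lemma~\ref{canonical:lemma}, now over the grammar of timed causal configurations of Definition~\ref{timed-causal-config:def} augmented with the two new constructors $\reff{a}{P}{\cdot}$ (first argument a duration-CSP process, second a timed causal configuration) and $\cdot \gg^{x} \cdot$, reading off the canonical shape by exhausting the distributivity equivalences. First I would pin down what \emph{canonical} means in this setting: a configuration is canonical precisely when no rewriting that pushes a leading marker ${}_{E_{\tau}}[\cdot]$ strictly inside an algebraic operator applies. These rewritings are the distributivity laws of \cite{mahboulthese} for $\ou$, $|[L]|$, $\inter$, $\setminus L$ and $\delay{d}$, together with the new law ${}_{E_{\tau}}[\reff{a}{P}{Q}] \equiv \reff{a}{P}{{}_{E_{\tau}}[Q]}$ assumed just above the statement. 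To see that canonical forms exist I would exhibit a well-founded measure — for instance the multiset of the depths at which event-set markers occur in the syntax tree, ordered by the multiset order — which strictly decreases under every such rewriting; hence every configuration rewrites to a canonical one, and it suffices to classify the canonical ones.

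The classification is then a case analysis on the outermost constructor of a canonical $\mat{R}_{\tau}$, using that every subterm of a canonical term is canonical and invoking the induction hypothesis on the immediate subconfigurations. For the constructors already handled in Lemma~\ref{canonical:lemma} nothing changes, which yields the first eight listed forms. For $\mat{R}_{\tau} = \reff{a}{P}{\mat{Q}_{\tau}}$, canonicity forbids a leading marker that the new distributivity law could push onto $\mat{Q}_{\tau}$, so $\mat{R}_{\tau}$ is already in the listed shape with $\mat{Q}_{\tau}$ canonical by the induction hypothesis; the process argument $P$ carries no event markers and needs no treatment. For $\mat{R}_{\tau} = \mat{P}_{\tau}\gg^{x}\mat{Q}_{\tau}$, since no law is assumed that moves event sets across $\gg^{x}$, the term is already canonical of the displayed form with both operands canonical.

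The main obstacle is the bookkeeping around the two new operators rather than any deep argument. For $\gg^{x}$ one must check that it genuinely does not interact with marker distribution — i.e. that a bracketed configuration ${}_{E_{\tau}}[\cdot]$ can never have $\gg^{x}$ as its principal operator, so that no extra rewriting, hence no extra canonical form, arises; this holds because $\gg^{x}$ is introduced only as an auxiliary operator on configurations and never occurs under a process bracket $[\cdot]$. For refinement the subtlety is that $\reff{a}{P}{\cdot}$ mixes a process and a configuration, so one has to confirm that the configuration grammar is closed under the stated operations and that the single equivalence ${}_{E_{\tau}}[\reff{a}{P}{Q}] \equiv \reff{a}{P}{{}_{E_{\tau}}[Q]}$ is the only one needed, whence $\reff{a}{P}{\mat{Q}_{\tau}}$ with $\mat{Q}_{\tau}$ canonical is irreducible. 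With these two points settled the enumeration is exhaustive and the lemma follows.
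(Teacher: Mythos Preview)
Your proposal is correct and follows essentially the same route as the paper: the paper does not give a separate proof of this lemma but explicitly says it is obtained by extending Lemma~\ref{canonical:lemma}, whose appendix proof is precisely the structural reduction argument you describe (show that any ${}_{E_{\tau}}[R]$ with $R$ a composite process reduces by pushing the event set through the outermost operator). Your treatment of the two new constructors matches what the paper sets up --- the distributivity law ${}_{E_{\tau}}[\reff{a}{P}{Q}] \equiv \reff{a}{P}{{}_{E_{\tau}}[Q]}$ for refinement, and the observation that $\gg^{x}$ is a configuration-level operator never appearing under $[\cdot]$ --- and your added well-foundedness measure for termination is extra rigor the paper omits.
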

The function $\psi : \tconf \flechep 2^{\evns \times \actset \times \dom }$ that determines the set of events of a given timed configuration of duration-CSP$_{\rho}$ is the same as that of Definition  \ref{encap:def}  extended with the following rules:
\begin{align*}
 \evn{\mat{P} \gg^{x} \mat{Q}} &= \evn{\mat{P}} \cup (\evn{\mat{Q}} - \set{x}) \\  \evn{\reff{a}{P}{\mat{Q}}}&= \evn{\mat{Q}}
\end{align*}

\subsection{Operational semantics of duration-CSP$_{\rho}$}
This subsection introduce the operational semantics of duration-CSP$_{\rho}$ in the same way as we have done with duration-CSP.

% \longversion{
% \subsection{Timed causal bisimulation over the timed causal configurations}
% \begin{definition}\label{tccbis:def}
% Let $\tccbis \subseteq \tconf \times \tconf \times \mat{F}$ and  $(s_1,\mat{Q},f) \in \tccbis$. We say that  $\tccbis$ is a \myemph{timed causal bisimulation} iff:
% \begin{enumerate}[{1}.1.] 
% \item if $\mat{P} \stackrel{_Ea_x} {\flecheop} \mat{P}'$ then there exists $\mat{Q} \stackrel{_Fa_y} {\flecheop} \mat{Q}'$ such that 
% \begin{enumerate}[{i.}]
% \item  $u\in E$ if and only if  $f(u)\in F$, and
% \item $(\mat{P}',\mat{Q}',f') \in  \tccbis $ where $f':= (f_{\pi_1(\psi(\mat{P}')-x )})_{\pi_2(\psi(\mat{Q}')-y )} \; \cup \set{(x,y)}$.
% \end{enumerate}
% \item if $\mat{P} \stackrel{d} {\flecheop} \mat{P}$ then $\mat{Q} \stackrel{d} {\flecheop} \mat{Q}$.
%  \end{enumerate}
% %symmetrico
% \begin{enumerate}[{2}.1.] 
% \item if $\mat{Q} \stackrel{_Fa_y} {\flecheop} \mat{Q}'$ then there exists $\mat{P} \stackrel{_Ea_x} {\flecheop} \mat{P}'$ such that 
% \begin{enumerate}[{i.}]
% \item  $u\in E $ if and only if  $f(u)\in F$, and
% \item $(\mat{P}',\mat{Q}',f') \in  \tccbis $ where $f':= (f_{\pi_1(\psi(\mat{P}')-x )})_{\pi_2(\psi(\mat{Q}')-y )} \;\cup \set{(x,y)}$.
% \end{enumerate}
% \item if $\mat{Q} \stackrel{d} {\flecheop} \mat{Q}$ then $\mat{P} \stackrel{d} {\flecheop} \mat{P}$.
%  \end{enumerate}
% We say that two timed-CSP processes $P$ and  $Q$ are bisimilar, and write $P \sim Q$, if there exists a timed causal bisimulation $\tccbis$ such that $(_{\emptyset}[P], {}_{\emptyset}[Q]) \in \tccbis$.   
% \end{definition}
% }

\begin{definition}
The timed transition over the timed causal configurations of duration-CSP$_{\rho}$, denoted again  by $\flecheop$ is the relation that satisfies the rules 0,$\cdots$,VIII extended with the following rules:
 $$\textrm{R.1}
\frac{\mat{P}  \stackrel {_{E_{\tau}}a_y}{\flecheop} \mat{P}'}
{ \mat{P} \gg^{x} \mat{Q} \stackrel {_{E_{\tau}}a_z}{\flecheop} \mat{P}'[z/y] \gg^{x} \mat{Q}  }
$$
$z=get(\evns - \psi((\mat{P}') -\set{y}) \cup (\psi(\mat{Q}) -\set{x}))$

%R2
 $$\textrm{R.2}
\frac{\mat{P}  \stackrel {_{E_{\tau}}\delta_y}{\flecheop} \mat{P}'}
{ \mat{P} \gg^{x} \mat{Q} \stackrel {_{E_{\tau}}i_z}{\flecheop}  \mat{Q}[z/x]} \;\; z=get(\evns - (( \psi(\mat{Q}) - \set{x}))
$$
%R3
 $$\textrm{R.3}
\frac{\mat{Q}  \stackrel {_{E_{\tau}}a_y}{\flecheop} \mat{Q}' \;\; x \notin E_{\tau}}
{ \mat{P} \gg^{x} \mat{Q} \stackrel {_{E_{\tau}}a_z}{\flecheop} \mat{P} \gg^{x} \mat{Q}'[z/y]  }
$$
$
z= get (\evns - ((\psi(\mat{P}) \cup (\psi(\mat{Q}')-\set{y}) \cup \set{x}))
$
%RC4
 $$\textrm{R.4}
\frac{\mat{Q}  \stackrel {_{E_{\tau}}b_y}{\flecheop} \mat{Q}' \;\;\;\; b\neq a}
{  \reff{a}{P}{\mat{Q}} \stackrel {_{E_{\tau}}b_x}{\flecheop}  \reff{a}{P}{\mat{Q}'}}
$$

%RC5
 $$\textrm{R.5}\;\;
\frac{\mat{Q}  \stackrel {_{E_{\tau}}a_x}{\flecheop} \mat{Q}' \;\;\;\;
   _{E_{\tau}}[P]  \stackrel {_{E_{\tau}}b_y}{\flecheop} \mat{P}'}
{  \reff{a}{P}{\mat{Q}} \stackrel {_{E_{\tau}}b_z}{\flecheop}  \mat{P}'[z/y] \gg^{x}\reff{a}{P}{\mat{Q}'}}
$$
$z=get(\evns -  ((\psi(\mat{P}') -\set{y}) \cup (\psi(\mat{Q}') - \set{x}))) $

%R.tau.1
 $$\textrm{R.}\tau.1 \;
\frac{\mat{P}  \stackrel {d}{\flecheop} \mat{P}' \;\; x \in \evn{\mat{Q}}}
{ \mat{P} \gg^{x} \mat{Q} \stackrel {d}{\flecheop} \mat{P}' \gg^{x} \mat{Q} }
%R.tau.2
\;\;\; \textrm{R.}\tau.2 \;
\frac{\mat{P}  \stackrel {d}{\flecheop} \mat{P}' \;\; \mat{Q}  \stackrel {d}{\flecheop} \mat{Q}' \;\;x \notin \evn{\mat{Q}}}
{ \mat{P} \gg^{x} \mat{Q} \stackrel {d}{\flecheop} \mat{P}' \gg^{x} \mat{Q}' }
$$
%R.tau.3
 $$\textrm{R.}\tau.3\;\;
\frac{\mat{Q}  \stackrel {d}{\flecheop} \mat{Q}'}
{  \reff{a}{P}{\mat{Q}} \stackrel {d}{\flecheop}  \reff{a}{P}{\mat{Q}'}}
$$
\end{definition}
The rules R.1, R.2, R.3, R.$\tau$.1 and R.$\tau$.2 define the semantics of the partial sequencing operator $\gg^{x}$. That is, the rule R.1 expresses the fact that the occurrence of any action in the configuration  $\mat{P}$ remains possible in the configuration $\mat{P} \gg^{x} \mat{Q}$; however the renaming of the event $y$ is necessary because $y$ may be the event of some action which is already running in the configuration $\mat{Q}$.  The rule R.2 expresses the case of the successful  termination of $\mat{P}$. Note that the event $x$ is renamed with $z$ which identifies the successful termination of $\mat{P}$. The rule R.3 expresses that the occurrence of all the actions of the configuration $\mat{Q}$ which do not depend on the termination of the event $x$ -- i.e. on the successful termination of the configuration $\mat{P}$ -- can be executed  in the configuration $\mat{P} \gg^{x} \mat{Q}$.  The rule R.$\tau$.1 shows that the time is allowed only to elapse in the left part of the configuration $\mat{P} \gg^{x} \mat{Q}$  whenever  $\mat{Q}$ is waiting for the termination of the event $x$. However the rule R.$\tau$.2 allows the elapse of time in both parts of the configuration $\mat{P} \gg^{x} \mat{Q}$ if $\mat{Q}$ is not waiting for the termination of $x$. \\
The rules R.4, R.5 and R.$\tau$.3  give the semantics of the refinement operator $\rho$. The rule R.4 shows the case when the configuration $\mat{Q}$ provides  an action $b$ which is not subject to the refinement; in this case the action $b$ remains possible in the configuration $\reff{a}{P}{\mat{Q}}$.  The rule R.5 expresses the case when  the configuration $\mat{Q}$ provides the action $a$ which has to  be     refined into the process $P$. Hence the execution of the action $a$ must be replaced by the execution  of the process $P$. Since the execution of $a$ depends on the termination of all  the events of $E_{\tau}$, then every  action of $P$ depends also on the termination of the same set of events. Moreover, it is clear that all the actions of $\mat{Q}'$ which depend on the termination of $a$ must also depend on the successful termination of $_{E_{\tau}}[P]$, however the remaining actions are executed in parallel with $_{E_{\tau}}[P]$.  This shows the usefulness of the partial sequencing operator $\gg^{x}$ in expressing the semantics of the refinement operator.  %Finally,  the rule R.$\tau$.3 shows the passage of time   

The following Theorem shows   the main property of the refinement operator $\rho$; it expresses that the refinement operator preserves the timed causal bisimulation\footnote{Indeed we mean the timed causal bisimulation that links the timed configurations and which is  defined in a routine way, see  the appendix Definition \ref{merde}.  }.  
 \begin{theorem} \label{ref:th}
 For every timed configuration $\mat{P}, \mat{Q}$ of duration-CSP$_{\rho}$, for every  action $a$ and for every duration-CSP process $E$, if $\mat{P}\tcsbis \mat{Q}$ then $\reff{a}{E}{\mat{P}}\tcsbis \reff{a}{E}{\mat{Q}}$.  
 \end{theorem}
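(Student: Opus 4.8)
The plan is to exhibit an explicit witnessing relation obtained by closing the given bisimulation under the refinement context and under the partial–sequencing layers that rule R.5 introduces. Fix a timed causal bisimulation $\mathcal{R}$ on timed configurations, with events' bijection $g_0$, such that $(\mat{P},\mat{Q})\in\mathcal{R}$ via $g_0$. Define a relation $\mathcal{S}$ consisting of all pairs
$$
\Big(\ \mat{E}_1\gg^{x_1}\big(\cdots\gg^{x_n}\reff{a}{E}{\mat{U}}\big)\ ,\ \ \mat{F}_1\gg^{y_1}\big(\cdots\gg^{y_n}\reff{a}{E}{\mat{V}}\big)\ \Big),\qquad n\ge 0,
$$
where $(\mat{U},\mat{V})\in\mathcal{R}$, each $\mat{E}_i$ (resp.\ $\mat{F}_i$) is a configuration $\flecheop$-reachable from some $_{G_i}[E]$ (resp.\ $_{G_i'}[E]$), and there is an events' bijection $h$ that extends the relevant restriction of the bijection witnessing $(\mat{U},\mat{V})\in\mathcal{R}$, sends each $x_i$ to $y_i$, and maps the events of $\mat{E}_i$ onto those of $\mat{F}_i$ in accordance with the two $E$-runs, which are required to be images of one another under $h$. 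The case $n=0$ shows $(\reff{a}{E}{\mat{P}},\reff{a}{E}{\mat{Q}})\in\mathcal{S}$ via $g_0$, so it suffices to prove that $\mathcal{S}$ is a timed causal bisimulation on duration-CSP$_{\rho}$ configurations.

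First I would prove a small equivariance lemma for the operational semantics of plain duration-CSP of Section~\ref{D-CSP:sec}: for every injection $\sigma$ on $\evns$, $_{E_\tau}[E]\stackrel{_{E_\tau}b_x}{\flecheop}\mat{W}$ iff $_{\sigma(E_\tau)}[E]\stackrel{_{\sigma(E_\tau)}b_{\sigma(x)}}{\flecheop}\sigma(\mat{W})$ up to the element returned by $get$, and likewise for delay transitions. This is a routine induction on derivations, and it is what keeps the two copies of $E$ occurring in a pair of $\mathcal{S}$ in lock-step, with a renaming that can always be chosen to extend the current bijection.

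Then I would check the transfer conditions of the timed causal bisimulation by a case analysis on the rule used for a move of the left component, the right component being symmetric. A transition of $\reff{a}{E}{\mat{U}}$ is by R.4, R.5 or R.$\tau$.3. For R.4 (an action $b\ne a$ of $\mat{U}$) and R.$\tau$.3 (a delay) the matching move comes directly from $\mat{U}\,\mathcal{R}\,\mat{V}$ and we remain in the case $n=0$, updating the bijection exactly as the $f'$-clause of the bisimulation prescribes. For R.5 (the action $a$ of $\mat{U}$, firing one step $b_y$ of $_{E_\tau}[E]$) we match the $a$-step of $\mat{U}$ by an $a$-step of $\mat{V}$ from a causally related context using $\mathcal{R}$, and the accompanying $b$-step of the fresh copy of $E$ by the equivariance lemma; the residual is a height-$1$ stack $\mat{W}[z/y]\gg^{x}\reff{a}{E}{\cdot}$ on each side, hence in $\mathcal{S}$. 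A transition of a stack $\mat{E}_1\gg^{x_1}(\cdots)$ is by R.1, R.2, R.3, R.$\tau$.1 or R.$\tau$.2. Moves by R.1 and R.2 act on the outermost $E$-copy $\mat{E}_1$ and are matched via the equivariance lemma (R.2 moreover discards that copy and renames $x_1$, landing on a height-$(n{-}1)$ stack, still an $\mathcal{S}$-pair). R.3 acts on the tail and is matched by the $\mathcal{S}$-membership of the two tails built into the definition of $\mathcal{S}$. The time rules R.$\tau$.1 and R.$\tau$.2 are dual: one must check that the guard testing whether $x_1$ occurs in the tail's event set holds on the left iff $y_1$ occurs in the tail's event set on the right, and that a $d$-delay of the tail on the left is matched on the right; the first point is precisely where one uses that $h$ restricts to a bijection between the current event sets of two $\mathcal{S}$-related configurations and sends $x_1$ to $y_1$.

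The main obstacle is the event bookkeeping, concentrated in two spots. First, the relation $\mathcal{S}$ has to be set up so as to be genuinely closed under R.5 — which simultaneously produces a refinement residual \emph{and} a fresh copy of $E$ inside a new $\gg$-layer — and under R.3, which can push further layers deeper inside; stating the invariant (``matching towers of $E$-copies over an $\mathcal{R}$-pair, glued by one bijection'') so that it is visibly preserved by every rule is the real content of the argument. Second, since $get$ returns only \emph{some} element of its argument, the fresh events chosen on the two sides generally differ, so at each step one must extend $h$ and re-verify that the extension still witnesses $\mathcal{S}$ and is compatible with the side conditions ($a\notin L\cup\set{\delta}$, $x\notin E_\tau$, $x$ occurring in $\mat{Q}$, $b\ne a$, \dots). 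All of this is routine but must be carried out uniformly; everything concerning the basic duration-CSP operators is inherited from the causality-semantics constructions of \cite{mahboulthese} and the equivalence argument underlying Theorem~\ref{main:theorem}.
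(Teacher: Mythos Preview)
Your plan is sound and is essentially the paper's argument: build a candidate relation by closing the given bisimulation under the refinement context and under the $\gg^{x}$-layers that R.5 creates, then discharge the transfer conditions by a case analysis on R.1--R.5 and the time rules. The only presentational difference is that the paper states the candidate \emph{recursively}, as $\myequiv=\myequiv_{1}\cup\myequiv_{2}$ with $\myequiv_{1}=\{(\reff{a}{E}{\mat{P}},\reff{a}{E}{\mat{Q}})_{f}\mid(\mat{P},\mat{Q})_{f}\in\myequiv'\}$ and $\myequiv_{2}=\{(\mat{R}\gg^{x}\mat{P}^{+},\,\mat{R}\gg^{y}\mat{Q}^{+})_{f}\mid(\mat{P}^{+}[v/x],\mat{Q}^{+}[w/y])_{f'}\in\myequiv\}$, i.e.\ a single $\gg$-layer whose tail is back in $\myequiv$ and whose head is the \emph{same} configuration $\mat{R}$ on both sides; your $n$-level towers are just the unfolding of this recursion. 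Your equivariance lemma, which lets the two $E$-copies differ by a renaming, is precisely what the paper elides by writing an identical $\mat{R}$ on both sides and then ``assuming'' a common fresh event $z$ when handling R.1---your formulation is more honest about the $get$ bookkeeping, but the mathematical content and the case split (R.4/R.5/R.$\tau$.3 at the leaf, R.1/R.2/R.3/R.$\tau$.1/R.$\tau$.2 on a $\gg$-layer) are identical to the paper's.
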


\section{Current and future works}\label{conc:sec}
 At the moment we are looking for a probabilistic extension of the timed causal transition systems in the following way:  rather than considering that the actions have a fixed duration,  it is more realistic to attribute to them  a probabilistic duration that follows a certain distribution, notably a normal (Gaussian)  distribution. Within this  model, many problems suggest themselves such as the model checking one.    This is an orthogonal formalism  w.r.t. the probabilistic timed automata  \cite{probata} where  the probabilities   are  attributed  to the  transitions rather than the actions.   

An other work consists in considering the model checking of the duration logics \cite{CalcDurD91,InterLogic:Russia} over the timed causal transition systems.   \\

Finally we emphasize that it is not  useful  to encode the timed-CTS model into the timed automata one since this implies the loss of  the notion of true concurrency and gives rise to a combinatorial explosion due  to the fact of splitting  each action into two events: the starting and the finishing one. The implementation of an environment that integrates the timed-CTS model, the duration-CSP language and the refinement operator $\rho$ should not provide any technical difficulties.        

\bibliographystyle{alpha}
\bibliography{TimedBiblio}

%$\textrm{ }$
\newpage

\section*{Appendix:  proofs of  the statements}
 
\setcounter{section}{4}
\setcounter{lemma}{0}
\begin{lemma}\label{canonical:lemma:app}
Every canonical  timed causal configuration in  $\tconf$  has one of the following forms:
\begin{align*}
&  _{E_{\tau}}[stop]   \hspace{5mm} _{E_{\tau}}[skip\set{d}]  \hspace{5mm} \delay{d} \mat{P}_{\tau}  \hspace{5mm} _{E_{\tau}}[a\set{d};P] 
 \hspace{5mm} \\ 
&  \mat{P}_{\tau} \ou \mat{Q}_{\tau}  \hspace{5mm} \mat{P}_{\tau}|[L]|\mat{Q}_{\tau}  \hspace{5mm} \mat{P}_{\tau} \setminus L  \hspace{5mm} \mat{P}_{\tau} \inter \mat{Q}_{\tau} 
\end{align*}
where $\mat{P}_{\tau}$ and $\mat{Q}_{\tau}$ are in the canonical  form. 
\end{lemma}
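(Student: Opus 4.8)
The plan is to argue by structural induction on the way $\mat{R}_{\tau} \in \tconf$ is built up according to Definition~\ref{timed-causal-config:def}, interleaved with a case analysis on the syntax of duration-CSP processes. First I would make explicit the rewrite rules hidden behind the phrase ``distributing the set of events over the algebraic operators'': namely ${}_{E_{\tau}}[\delay{d}P] \rightsquigarrow \delay{d}\,{}_{E_{\tau}}[P]$, ${}_{E_{\tau}}[P \setminus L] \rightsquigarrow {}_{E_{\tau}}[P]\setminus L$, and ${}_{E_{\tau}}[P \otimes Q] \rightsquigarrow {}_{E_{\tau}}[P] \otimes {}_{E_{\tau}}[Q]$ for $\otimes \in \set{\ou, |[L]|, \inter}$, each allowed to fire at any position inside a configuration; a configuration is canonical exactly when no such rule applies. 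I would note in passing that this system terminates — each step strictly decreases the number of process operators lying in the scope of a bracket ${}_{E_{\tau}}[\cdot]$ — so that canonical forms exist (uniqueness, via local confluence, is not needed for the statement).

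For the induction itself, let $\mat{R}_{\tau}$ be canonical. By Definition~\ref{timed-causal-config:def} it is formed in one of four ways. If $\mat{R}_{\tau} = {}_{E_{\tau}}[P]$, I would then split on the grammar of $P$: when $P$ is $stop$, $skip\set{d}$, or $a\set{d};P'$ the configuration is already one of the listed shapes (with no constraint on the residual process $P'$); when $P$ is any of $\delay{d}P'$, $P_1 \ou P_2$, $P_1 |[L]| P_2$, $P' \setminus L$, $P_1 \inter P_2$, the matching rewrite rule fires on ${}_{E_{\tau}}[P]$, so canonicity is violated and these subcases are vacuous. If instead $\mat{R}_{\tau}$ is $\delay{d}\mat{S}_{\tau}$, $\mat{S}_{\tau} \setminus L$, or $\mat{S}_{\tau} \otimes \mat{T}_{\tau}$, then any redex inside $\mat{S}_{\tau}$ (or $\mat{T}_{\tau}$) would be a redex inside $\mat{R}_{\tau}$, so the immediate subconfigurations are themselves canonical; the induction hypothesis puts them among the listed forms, and $\mat{R}_{\tau}$ is then exactly one of the remaining listed shapes $\delay{d}\mat{P}_{\tau}$, $\mat{P}_{\tau}\setminus L$, $\mat{P}_{\tau} \ou \mat{Q}_{\tau}$, $\mat{P}_{\tau}|[L]|\mat{Q}_{\tau}$, $\mat{P}_{\tau} \inter \mat{Q}_{\tau}$ with canonical components.

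The one point that needs care is the bookkeeping between the two layers of structure — assembly of a configuration by the clauses of Definition~\ref{timed-causal-config:def} versus assembly of the process inside a bracket by the duration-CSP grammar — and, tied to it, pinning down precisely which rewrites count as ``distributing events over operators'' and checking that the resulting system is well-founded. Once that is fixed, the induction is routine, so I do not expect any genuine obstacle here.
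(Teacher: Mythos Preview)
Your proposal is correct and follows essentially the same approach as the paper: both make explicit the same five distributivity rewrites ${}_{E_{\tau}}[\delay{d}P]\rightsquigarrow\delay{d}\,{}_{E_{\tau}}[P]$, ${}_{E_{\tau}}[P\setminus L]\rightsquigarrow{}_{E_{\tau}}[P]\setminus L$, and ${}_{E_{\tau}}[P\otimes Q]\rightsquigarrow{}_{E_{\tau}}[P]\otimes{}_{E_{\tau}}[Q]$ for $\otimes\in\{\ou,|[L]|,\inter\}$, and argue that any configuration not already of one of the listed shapes admits such a reduction. Your write-up is in fact more complete than the paper's---which only spells out the ${}_{E_{\tau}}[R]$ case and leaves the outer-constructor cases implicit---and your termination remark is a welcome addition.
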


\begin{proof}
We prove by induction that every timed configuration which is not under one of these forms can be reduced by distributing the set of events over the algebraic operators.  The proof of the same lemma but upon the \emph{untimed} configurations was given in \cite{mahboulthese}, however we adapt it to the timed configurations.  

If a given timed configuration $\mat{R}_{\tau}'$ can  be obtained from $\mat{R}_{\tau}$ by distributing the set of events over the algebraic operators then we write $\mat{R}_{\tau} \reduc \mat{R}_{\tau}'$.  We only consider the cases where the timed configuration is of the form $_{E_{\tau}}[R]$:
\begin{itemize}
\item $R \cong \delay{d}P$:    $\quad  _{E_{\tau}}[R] \reduc  \delay{d} _{E_{\tau}}[P]$,
\item $R \cong P+Q$:  $\quad  _{E_{\tau}}[R] \reduc   \;  _{E_{\tau}}[P] + \; _{E_{\tau}}[Q]$.
\item $R \cong P|[L]|Q$: $\quad  _{E_{\tau}}[R] \reduc \;  _{E_{\tau}}[P] |[L]|  \; _{E_{\tau}}[Q]$, 
\item $R \cong P \setminus L$: $\quad  _{E_{\tau}}[R] \reduc    _{E_{\tau}}[P] \setminus L $,
\item $R \cong P \inter Q$: $\quad  _{E_{\tau}}[R] \reduc  \;  _{E_{\tau}}[P] \inter  \; _{E_{\tau}}[Q]$.
\end{itemize}
This ends the proof of Lemma \ref{canonical:lemma}.
\end{proof}

\setcounter{section}{5}
\setcounter{proposition}{2}
\begin{lemma}%\label{fin:lemma}
Let $s_1 \stackrel{\langle _E{b}_x, \; \fin{\le u }{E}+d, \; c_x \rangle }{\fleche} s_2$ be a timed transition of a given timed-CTS. The action $b$ is enabled in the timed interval $[\tau+d,\tau+d+u]$ where $t\in \dom$ is the time stamp of the  termination of the last finished action in $E$.   
\end{lemma}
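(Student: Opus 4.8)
Proof plan for Lemma~\ref{fin:lemma}. The plan is to unfold the guard $\fin{\le u}{E}+d$ explicitly, translate it into a statement about real time via the clock semantics, and then localise both resulting bounds to the action of $E$ that terminates last. First I would apply Definition~\ref{delay:fun:def} to the right-hand side of \eqref{F:eq}: since $+d$ distributes over $\myet$ and over $\myou$ and merely shifts the constant in each atomic constraint, we obtain
$$\fin{\le u}{E}+d \;=\; \myetb_{y:a\in E}\big(\dr{a}+d\le c_y\big)\;\myet\;\myoub_{y:a\in E}\big(c_y\le \dr{a}+u+d\big),$$
where $E$ is finite and (as it occurs in the construction) nonempty.

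Next I would recall the clock semantics of a timed-CTS (Section~\ref{T-CTS:sec}). Along any run that reaches $s_1$, the clock $c_y$ attached to an event $y:a\in E$ was reset to $0$ at the real instant $s_y$ at which the action $a$ started, and it advances with slope $1$ under the delay transitions; hence at real time $T$ we have $c_y = T-s_y$, and $a$ has terminated exactly when $c_y$ reaches $\dr{a}$, i.e. at real time $s_y+\dr{a}$. Put $\tau := \max_{y:a\in E}(s_y+\dr{a})$, which is attained by finiteness of $E$, say by $y_0:a_0\in E$; by construction $\tau$ is precisely the termination instant of the last finished action(s) of $E$.

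Now I would evaluate the two blocks at real time $T$. The atom $\dr{a}+d\le c_y$ reads $T\ge s_y+\dr{a}+d$, so the conjunction $\myetb_{y:a\in E}(\dr{a}+d\le c_y)$ is equivalent to $T\ge\max_{y:a\in E}(s_y+\dr{a})+d=\tau+d$. The atom $c_y\le\dr{a}+u+d$ reads $T\le s_y+\dr{a}+u+d$, so the disjunction $\myoub_{y:a\in E}(c_y\le\dr{a}+u+d)$ is equivalent to $T\le\max_{y:a\in E}(s_y+\dr{a})+u+d=\tau+d+u$; both maxima are witnessed by $y_0$. Therefore $\fin{\le u}{E}+d$ holds exactly when $\tau+d\le T\le\tau+d+u$. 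Since this guard already forces $\dr{a}\le c_y$ for every $y:a\in E$ (because $d\ge0$), all actions of $E$ have terminated whenever it holds, consistently with the semantics of a timed-CTS; hence $b$ is enabled precisely on $[\tau+d,\tau+d+u]$, as claimed.

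The distribution of $+d$ and the translation $c_y=T-s_y$ are routine bookkeeping. The only point requiring care is the disjunction: one must argue that ``counting from the termination of the last finished action(s)'' is the correct reading, i.e. that a disjunction of upper bounds $c_y\le\dr{a}+u+d$ collapses to the largest one, contributed by $y_0$, and symmetrically that the conjunction of lower bounds collapses to the one contributed by $y_0$. Both facts follow at once from $s_y+\dr{a}\le\tau$ for every $y:a\in E$, so there is no real obstacle beyond handling $\max$ versus the individual atoms carefully.
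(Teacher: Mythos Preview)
Your argument is correct and follows essentially the same route as the paper: unfold $\fin{\le u}{E}+d$ via the delay function, read the conjunction as the lower bound $T\ge\tau+d$ and the disjunction as the upper bound $T\le\tau+d+u$, both governed by the last-terminating action. Your version is in fact a bit more explicit than the paper's (you spell out $c_y=T-s_y$ and add the consistency check that $d\ge0$ forces termination of all actions in $E$), but the structure and key steps are the same.
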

\begin{proof}
  Recall first the definition of $\mat{F}^{\le u}$ (see Equation (\ref{F:eq}) at page  \pageref{F:eq:page}): 
\begin{align}
\fin{\le u}{E}&= \myetb_{x:a \; \in E}\Big(\dr{a}\le c_x\Big)  \myet \myoub_{x:a \; \in E}\Big(c_x\le \dr{a}+u\Big) %\\
%\fint{\le u}{d}{M}&= \myetb_{x:a \; \in E}\Big(\dr{a}+ d \le c_x\Big)  \myet \myoub_{x:a \; \in E}\Big( \dr{a}+d \le c_x\le \dr{a}+u+d\Big)
\end{align}
therefore by the definition of $+$ (see Definition \ref{delay:fun:def}), we get 

\begin{align*}
\fint{\le u}{d}{E}& =\\ &\hspace{-8mm} \underbrace{\myetb_{x:a \; \in E}\Big(\dr{a}+d \le c_x\Big)}_{\Phi_1}  \myet \underbrace{\myoub_{x:a \; \in E}\Big(c_x\le \dr{a}+ d + u \Big)}_{\Phi_2} %\\
%\fint{\le u}{d}{M}&= \myetb_{x:a \; \in E}\Big(\dr{a}+ d \le c_x\Big)  \myet \myoub_{x:a \; \in E}\Big( \dr{a}+d \le c_x\le \dr{a}+u+d\Big)
\end{align*}

On the one hand, the constraint $\Phi_1$ ensures that the action $b$ is enabled in the interval $[\tau+d , \infty]$, where $\tau$ is the time stamp of termination of the last finished action in $E$. \\
On the other hand,  the condition $\Phi_2$ states that the  action $b$ is enabled in the interval $[0,\tau_{max}]$ where 
\begin{align*}
\tau_{max} \;= \; Max_{ x:a \in E} \; \set{\tau_x + d+ u} = Max_{ x:a \in E} \; \set{\tau_x} +  d + u
\end{align*}
 where  $\tau_x$ is the time stamp of the termination of the action $a$ s.t.  $x:a \in E$. Hence,
\begin{align*}
\tau_{max} \;= \; \tau  + d + u 
\end{align*} 
Therefore, the constraint $\Phi_2$ states that the action $b$ is enabled in the interval $[0,\tau + d + u]$.  We conclude that the constraint $\Phi_1 \myet \Phi_2$ states that the action $b$ is enabled in the interval $[\tau+d ,\tau+d+u]$.
\end{proof}

\setcounter{section}{5}
\setcounter{proposition}{5}
\setcounter{theorem}{0}
\begin{theorem}
The operational and the denotational semantics $( . )^{op} $ and $\sem{ . }$  are equivalent, i.e. for each duration-CSP process $P$ there exists a $\tau$-bisimulation $\myequiv$  such that $(\sem{P},\;  P^{op})   \in  \myequiv $.
\end{theorem}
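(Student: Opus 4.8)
The plan is to build an explicit $\tau$-bisimulation $\myequiv$ relating the states of the timed-CTS $\sem{P}$ to the timed causal configurations reached by $P^{op}$, and then to verify the four clauses of Definition~\ref{hsim:def} by a simultaneous induction on the structure of canonical configurations (Lemma~\ref{canonical:lemma:app}) and on the height of the SOS derivations. The guiding intuition is that the two semantics keep the \emph{same} information but store the elapsed-time component $t_x$ of an event $x{:}a{:}t_x$ in two different places: inside the configuration on the operational side, in the clock valuation on the denotational side. Accordingly I define $\myequiv$ by induction on the canonical form: $\aconf{s}{\nu}\myequiv\mat{R}_{\tau}$ when $s$ and $\mat{R}_{\tau}$ have the same operator tree with the same residual process terms at the leaves and the same event names, and $\nu(c_x)=t_x$ for every event $x{:}a{:}t_x$ occurring in $\mat{R}_{\tau}$ (the clocks of non-occurring events being irrelevant); write $\nu_{\mat{R}_{\tau}}$ for this valuation and $\pi(\mat{R}_{\tau})$ for the stamp-erased skeleton of $\mat{R}_{\tau}$. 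The congruence operators $\ou,\;|[L]|,\;\setminus L,\;\inter$ are treated componentwise, and the only delicate clause — that for $\delay{\cdot}$ — is discussed below. The events' bijection carried by $\myequiv$ is the identity $\mathrm{id}_{\evns}$: this is sound because rules $0$--VIII and $1$--$7$ are in one-to-one correspondence and, whenever a fresh event must be invented, both members of a pair obtain it by applying $get$ to the same finite set of events, which follows from $\psi\circ\pi=\pi\circ\psi$ (in the $2^{\evns}$-valued reading of $\psi$). Since erasing stamps from ${}_{\emptyset}[P]$ yields ${}_{\emptyset}[P]$ and its empty valuation is $\zero$, the initial configurations $\aconf{{}_{\emptyset}[P]}{\zero}$ of $\sem{P}$ and ${}_{\emptyset}[P]$ of $P^{op}$ lie in $\myequiv$, which is what the theorem asks.

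It then remains to check the transfer clauses, rule by rule; directions 1.x and 2.x of Definition~\ref{hsim:def} are mirror images because each denotational rule is the clock-translation of the matching operational rule. For an action step I match the fired action, the cause set and the fresh event — clause (i) of Definition~\ref{hsim:def} asks only for agreement up to the stamp, which is exactly what erasure gives — and I note that the newly created event is reset on the right ($\lambda=\set{c_x}$) and is born with stamp $0$ on the left, so the target valuations again satisfy $\nu'(c_x)=t_x$. The guards must match too: the operational side condition $Finish(E_{\tau})$ corresponds to the lower-bound conjuncts $\myetb_{x:a\in E}(\dr{a}\le c_x)$ of $\fin{\le u}{E}$ holding under $\nu$, reinforced by clause (iii) of the timed-CTS semantics (``all actions of $E$ have terminated''), while the upper-bound conjunct $\myoub_{x:a\in E}(c_x\le\dr{a}+u)$ — suitably shifted by the enclosing delays via rule $7$ and Definition~\ref{delay:fun:def} — is precisely the window within which the $\tau$-rules allowed the operational configuration to reach a firing position; here Lemma~\ref{fin:lemma} is applied verbatim, and for the synchronised rule (IV.c)$\leftrightarrow$($4.c$) one further uses that $\Omega=\varphi_1[c_z/c_x]\myet\varphi_2[c_z/c_y]$ and $\Gamma=\lambda_1[c_z/c_x]\cup\lambda_2[c_z/c_y]$ decompose the combined guard and reset coordinatewise, matching the combined condition $Finish(E_{\tau}\cup F_{\tau})$. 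For a delay step of length $d$ the crucial facts are $\pi(\mat{R}_{\tau}+d)=\pi(\mat{R}_{\tau})$ and, on the events actually present, $\nu_{\mat{R}_{\tau}+d}=\nu_{\mat{R}_{\tau}}+d$, both proved by the very induction defining $\mat{R}_{\tau}+d$, so that an operational $d$-step is answered by the always-enabled timed-CTS step $\aconf{s}{\nu}\stackrel{d}{\fleche}\aconf{s}{\nu+d}$, and conversely.

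I expect the delay operator to be the main obstacle. Operationally a $\delay{d}\cdot$ wrapper is a live part of the state whose counter runs down under rules VII.$\tau$/VII.$\tau'$ and is discharged only upon reaching $0$ (VII.a), whereas denotationally the delay is compiled away once and for all into every first-transition guard as $\varphi+d$ (rule $7$) and never recurs as a mutable component; matching these requires defining $\myequiv$ so that the whole family of operational configurations $\set{\delay{d'}\mat{R}_{\tau}}$ obtained by running the same counter down to $d'$ all relate to the single denotational delay-state, with a correspondingly advanced valuation recording how much of the delay has been consumed — formally, $\aconf{\delay{e}\,s}{\nu}\myequiv\delay{d}\mat{R}_{\tau}$ when $e\ge d$ and $\aconf{s}{\nu-(e-d)}\myequiv\mat{R}_{\tau}$ — and then re-checking the action and delay clauses across this clause, where Definition~\ref{delay:fun:def} (the $\varphi+d$ bookkeeping) and Lemma~\ref{fin:lemma} do the work. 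A secondary nuisance, present throughout rules IV.a--c, V, VI.a and the congruence rules, is to verify for each of them that the argument presented to $get$ on the decorated side and on the bare side denote literally the same set, so that the identity bijection is preserved and erasure genuinely commutes with the step; this is mechanical but must be discharged for every rule before the induction closes.
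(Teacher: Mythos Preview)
Your route differs from the paper's. The paper builds $\myequiv=\bigcup_n(\myequiv_n\cup\hat{\myequiv}_n)$ by \emph{co-reachability}: $\myequiv_n$ collects the pairs reached after exactly $n$ matched action transitions followed by arbitrary matched delay, and one proves $\hat{\myequiv}_n=\emptyset$ by induction on $n$ with an inner structural induction on the process. Two invariants are carried along: SYNCH1, which is exactly your $\nu(c_x)=t_x$, and SYNCH2, which asserts that the underlying process terms coincide, but only at the \emph{anchor} pairs $\lbul{\myequiv_n}$, i.e.\ immediately after an action step and before any time has elapsed. You instead define $\myequiv$ directly as a structural correspondence on canonical forms. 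Both schemes are legitimate and both use the identity event bijection.

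There is, however, a real gap in your structural clause. You require that ``$s$ and $\mat{R}_\tau$ have the same operator tree with the same residual process terms at the leaves'' and then isolate $\delay{\cdot}$ as the one place where an operational counter runs down while the denotational state stays put. But the same phenomenon occurs at the leaves themselves: under rules I.$\tau$ and II.$\tau$ the bound $u$ in $skip\{u\}$ and in $a\{u\};P$ is decremented on the operational side once $Finish(E_\tau)$ holds, whereas denotationally ${}_E[a\{u\};P]$ is unchanged and only $\nu$ advances. After three time units, $\aconf{{}_\emptyset[a\{5\};Q]}{\zero+3}$ must be related to ${}_\emptyset[a\{2\};Q]$, and these do \emph{not} have the same residual term. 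Your special clause for $\delay{\cdot}$ therefore needs companions for the prefix and $skip$ leaves, relating $a\{u\}$ on the denotational side to $a\{u'\}$ with $u'\le u$ on the operational side, with $u-u'$ accounted for by $\nu$ relative to the termination times of $E$; and your claim that the upper disjunct of $\fin{\le u}{E}$ matches the operational firing window must then be re-argued against this relaxed relation. The paper's construction avoids all of this precisely because SYNCH2 is only demanded at anchors, where no $\tau$-rule has yet touched a counter. A second, smaller omission: your $\delay{\cdot}$ clause only relates $\aconf{\delay{e}s}{\nu}$ to operational configurations that still carry a $\delay{d}\cdot$ wrapper; once the operational counter is discharged and rule VII.$\tau'$ pushes the time step inside, the operational side is a bare $\mat{R}_\tau'$ while the denotational side is still $\aconf{\delay{e}s}{\nu+\cdots}$, so you also need the ``$d\le 0$'' extension of your clause and must re-check the action transfer there.
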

\begin{proof}
We construct a binary relation $\myequiv$ linking the elements of $\sem{P}$ and $ P^{op}$ , afterward  we prove that it is a $\tau$-bisimulation. First of all we came assume that $\myequiv$ comes with the identity function  $Id: appendix.tex,v 1.20 2009/10/17 19:03:59 belkhir Exp $ over the set of events, i.e. we do not need to rename the events. \\
We let 
\begin{align*}
\myequiv=(\myequiv_0 \cup \hat{\myequiv}_0)\cup \cdots \cup (\myequiv_n \cup \hat{\myequiv}_n) \cup \cdots
\end{align*}
where 
\begin{align*}
  \myequiv_0=  & \set{(\aconf{{_{\emptyset}}[P]}{\zero}, {_{\emptyset}}[P]) } \cup  \\   \textrm{ } & \set{(\mat{P},\mat{Q}_{\tau}) \tst \exists d\in \mathbb{R} \; 
      \aconf{{_{\emptyset}}[P]}{\zero} \stackrel{d}{\fleche} \mat{P}  \\ & \tand {_{\emptyset}}[P]  \stackrel{d}{\flecheop} \mat{Q}_{\tau}) } \\
 \myequiv_{m+1}= & \set{(\mat{P}^{m+1}, \mat{Q}_{\tau}^{m+1})\; \tst \exists (\mat{P}^{m}, \mat{Q}_{\tau}^{m})\in \myequiv_{m} \tst \\ 
&  \mat{P}^{m} \stackrel{_{E}a_{x}} {\fleche} \mat{P}^{m+1} \tand \\
 &  \mat{Q}_{\tau}^{m} \stackrel{_{E_{\tau}}a_{x}}{\flecheop} \mat{Q}_{\tau}^{m+1} \textrm{ for some action } a  }\;\; \cup\\
& \set{(\mat{P}',\mat{Q}') \; \tst\;  \exists d \in \mathbb{R} \tst  \mat{P}^{m+1}\stackrel{d} {\fleche} \mat{P}' \\
& \;\;\; \tand   \mat{Q}_{\tau}^{m+1}\stackrel{d} {\flecheop} \mat{Q}' }\\
%hat{R}
\end{align*}
 \begin{align*}
  \hat{\myequiv}_m& =\set{(\mat{P}^m, \bullet) \tst  (\mat{P}^m, \mat{Q}_{\tau}^{m})  \in \myequiv_m \\ & \;\;\; \tand \exists d \in \mathbb{R} \tst  {\mat{P}}^{m}\stackrel{d} {\fleche} \mat{P}' \tand   \mat{Q}_{\tau}^{m}\stackrel{d} {\not\flecheop} } \;\; \cup \\
& \;\;\;  \set{(\bullet,\mat{Q}_{\tau}^m) \tst  (\mat{P}^m, \mat{Q}_{\tau}^{m})  \in \myequiv_m \\ & \;\;\; \tand \exists d \in \mathbb{R} \tst  \mat{Q}_{\tau}^{m}\stackrel{d} {\flecheop} \mat{Q}' \tand   \mat{P}^{m}\stackrel{d} {\not\fleche} }
 \end{align*}
During the construction of $\myequiv_i, i=0,\cdots,n$, we require that the  invariants (\ref{INV1}) and (\ref{INV2})   hold.\\ 
The invariant (\ref{INV1}) is defined as follows: for each pair $(\aconf{\mat{R}}{\nu}, \mat{R}_{\tau}  ) \in \myequiv_n$, the pair $(\evn{\mat{R}_{\tau}}, \evn{\mat{R}})$ is \emph{synchronized} in the following sense:
\begin{align*}
z:b:t_{z} \in \evn{\mat{R}_{\tau}} \;\;  \textrm{ iff } \;\; z:b  \in \evn{\mat{R}} \tand t_{z} = \nu(c_z) \label{INV1} \tag{SYNCH1} 
\end{align*}
To give the definition of the  invariant \ref{INV2} we need some notations. Let us define the function $\for{.}$ that takes a timed configuration (in $\tconf$ or in  $\mathbb{C}$) and returns only the  duration-CSP process by deleting recursively the set of events:
\begin{align*}
  \for{_{E_{\tau}}[stop]}  &=   stop   \\  \for{ _{E_{\tau}}[skip\set{d}]}  &= skip\set{d} \\ 
 \for{_{E_{\tau}}[a\set{d};P]}&= a \set{d};P    \\ \for{\delay{d} \mat{P}_{\tau}} &= \delay{d} \for{\mat{P}_{\tau}} \\ 
 \for{\mat{P}_{\tau} \setminus L}&= \for{\mat{P}_{\tau}} \setminus L \\  \for{\mat{P}_{\tau} \ou \mat{Q}_{\tau}}&=  \for{\mat{P}_{\tau}} \ou \for{\mat{Q}_{\tau}} 
\\ \for{\mat{P}_{\tau}|[L]|\mat{Q}_{\tau}} &= \for{\mat{P}_{\tau}} |[L]| \for{\mat{Q}_{\tau}} \\ \for{\mat{P}_{\tau} \inter \mat{Q}_{\tau}}& = \for{\mat{P}_{\tau}} \inter \for{\mat{Q}_{\tau}}  
\end{align*} 
We let also, for $i\in \mathbb{N}$,  $\lbul{\myequiv_i}$  to be:
\[
\lbul{\myequiv_i} = 
  \begin{cases} 
   \set{(\aconf{_{\emptyset}[P]}{\zero},_{\emptyset}[P] ) } & \hspace{-15mm}\text{if } i=0 \\
    \set{({\mat{R}}^{i},{\mat{R}_{\tau}}^{i}) \in \myequiv_{i}  \tst \exists  ({\mat{R}}^{i-1},{\mat{R}_{\tau}}^{i-1}) \in \myequiv_{i-1} \tst \\ 
     \;\; \mat{R}^{i-1} \stackrel{_{E}a_x} {\fleche} \mat{R}^{i} \tand {\mat{R}_{\tau}}^{i-1} \stackrel{_{E_{\tau}}a_x} {\fleche} {\mat{R}_{\tau}}^{i} \textrm{ for some } a}    & \\ & \hspace{-15mm}\text{if } i\ge 1 
  \end{cases}
\]
% \begin{align*}
%  \rbul{\myequiv_{i}}= &\set{({\mat{R}}^{i},{\mat{R}_{\tau}}^{i}) \in \myequiv_{i}  \tst   \exists  ({\mat{R}}^{i+1},{\mat{R}_{\tau}}^{i+1}) \in \myequiv_{i+1} \tst 
% \\ &     \;\; \mat{R}^{i} \stackrel{_{E}a_x} {\fleche} \mat{R}^{i+1} \tand {\mat{R}_{\tau}}^{i} \stackrel{_{E_{\tau}}a_x} {\fleche} {\mat{R}_{\tau}}^{i+1} \textrm{ for some } a } 
% \end{align*}
The invariant (\ref{INV2}) is given by :
\begin{align*}
&\forall i \in \mathbb{N}, \forall (\aconf{\mat{R}^{i}}{\nu} , \mat{R}_{\tau}^{i}) \in \lbul{\myequiv_{i}} \textrm{ we have that } \\ 
& (i) \; \for{\mat{R}^{i}}= \for{\mat{R}_{\tau}^{i}} \textrm{ and } (ii) \textrm{ the pair} (\mat{R}^{i},\mat{R}_{\tau}^{i}) \textrm{ is synchronized}
\label{INV2} \tag{SYNCH2}
\end{align*}

Now we shall prove that $\myequiv$ is a $\tau$-bisimulation. For this aim, it is enough to prove that, for each $n \in \mathbb{N}$,  $\myequiv_n \cup \hat{\myequiv}_n$ is a $\tau$-bisimulation i.e.  $\hat{\myequiv}_n=\emptyset$.  The proof is by induction on $n$. \\

\noindent \textbf{Initial step} $n=0$. i.e. we consider $\myequiv_0$ defined by:
\begin{align*}
\myequiv_0=  & \set{(\aconf{{_{\emptyset}}[P]}{\zero}, {_{\emptyset}}[P]) } \cup  \\   \textrm{ } & \set{(\mat{P},\mat{Q}_{\tau}) \tst \exists d\in \mathbb{R} \; 
      \aconf{{_{\emptyset}}[P]}{\zero} \stackrel{d}{\fleche} \mat{P}  \\ & \tand {_{\emptyset}}[P]  \stackrel{d}{\fleche} \mat{Q}_{\tau}) }
 \end{align*}
In this step we shall prove that \emph{(i)} $\hat{\myequiv}_0=\emptyset$, \emph{(ii)} $\myequiv_0$ satisfies  the invariants (\ref{INV1}) and (\ref{INV2}) and \emph{(iii)} $\myequiv_{1}$ satisfies  the invariant (\ref{INV2}). The proof now is by structural induction on  $P$.
\begin{proofbycases}
\begin{caseinproof}The case $P=stop$ is obvious.
\end{caseinproof}
\begin{caseinproof}$P=skip\set{u}$.  The rule (1.$a$) of the denotational semantics ensures that $\forall d \; 0\le d \le u $ there is a  derivation 
\begin{align*}
\aconf{_{\emptyset}[skip\set{u}]}{\zero} \stackrel{d}{\fleche} \aconf{_{\emptyset}[skip\set{u}]}{\zero+d}
\end{align*}
 In the same way, the rule (I.$\tau$) of the operational semantics allows,   for each  $d \in ]0,u]$, the derivation  
\begin{align*}
_{\emptyset}[skip\set{u}] \stackrel{d}{\flecheop} \; _{\emptyset}[skip \set{u-d}]
\end{align*} 
This shows that $\hat{\myequiv}_0=\emptyset$, therefore $\myequiv_0 \cup \hat{\myequiv}_0$ is a $\tau$-bisimulation.
Now we show that  $\myequiv_0$  satisfies the invariants (\ref{INV1}), (\ref{INV2})  and  $\myequiv_1$ satisfies the invariant (\ref{INV2}).\\
Note that $\myequiv_0$ satisfies trivially the invariant (\ref{INV1})
because $\psi(_{\emptyset}[skip \set{u}])=
\psi(_{\emptyset}[skip\set{u-d}])=\emptyset$.  Also, $\myequiv_0$
satisfies trivially the invariant (\ref{INV2}) because
$\lbul{\myequiv_{0}}=\set{\big(\aconf{_{\emptyset}[skip\set{u}]}{\zero},
  \; _{\emptyset}[skip\set{u}] \big)}$. To show that $\myequiv_1$
satisfies the invariant (\ref{INV2}) we consider
$\lbul{\myequiv_1}$. The latter is obtained first  by applying the rule (1.$a$) of the denotational semantics  to $\aconf{_{\emptyset}[skip]}{\zero+d}$ giving arise to the
derivation:
\begin{align*}
\aconf{_{\emptyset}[skip]}{\zero+d} \stackrel{_{\emptyset}\delta_x  }{\fleche} \;\; \aconf{{}_{\set{x:\delta}}[stop]}{(\zero+d)[x \mapsto 0]}
 \end{align*}

And by applying the rule (I.a) of the operational semantics to the configuration $_{\emptyset}[skip\set{u-d}]$ giving arise to the derivation:
 \begin{align*}
{}_{\emptyset}[skip\set{u-d}] \stackrel{{}_{\emptyset}\delta_x}{\fleche} \;{}_{\set{x:\delta:0}}[stop]  
\end{align*}
Therefore 
\begin{align*}
\lbul{\myequiv_1}= \set{\big(\aconf{{}_{\set{x:\delta}}[stop]}{(\zero+d)[x \mapsto 0]},\; {}_{\set{x:\delta:0}}[stop]\big)}
\end{align*}
Note that $\myequiv$ satisfies the invariant (\ref{INV2}) because 
\begin{align*}
(i) \; \for{{}_{\set{x:\delta}}[stop]}= \for{{}_{\set{x:\delta:0}}[stop]}=stop
\end{align*}
and \\
$(ii)$ clearly the pair  $ \big(\aconf{{}_{\set{x:\delta}}[stop]}{(\zero+d)[c_x \mapsto 0]},\; {}_{\set{x:\delta:0}}[stop]\big)$ \\ is synchronized since the clock $c_x$ is reset to zero. 
 \end{caseinproof}

\begin{caseinproof} The case $P=a\set{u};Q$ is similar to the previous one apart that 
  we deal here with the action $a$ instead of $\delta$, and with the process $Q$ instead of the process $stop$. 
\end{caseinproof}

\begin{caseinproof} The case $P=Q\ou R$ is straightforward by applying the induction hypothesis to $Q$ and $R$. 
\end{caseinproof}

\begin{caseinproof}$P=P_1 |[L]| P_2$.  First we show  that $\hat{\myequiv}_0=\emptyset$.  The rule IV.$\tau$ of the operational semantics implies that  that if 
\begin{align*}
_{\emptyset}[\; P_1 |[L]| P_2\; ] \stackrel{d}{\flecheop}\; _{\emptyset}[P_1'] |[L] [P_2']
\end{align*}
then 
\begin{align*}
_{\emptyset}[P_i] \stackrel{d} {\flecheop}\;  _{\emptyset}[P_i']\;\;\;\;\; i=1,2.
\end{align*}
By applying the induction hypothesis to both $P_1$ and $P_2$ we get  the possible derivations:
\begin{align*}   
\aconf{_{\emptyset}[P_i]}{\zero} \stackrel{d} {\fleche}  \aconf{_{\emptyset}[P_i]}{\zero + d} \;\;\;\;\;\; i=1,2
\end{align*}
Hence 
\begin{align*}
\big(\; \aconf{_{\emptyset}[P_1 |[L]|  P_2]}{\zero} \stackrel{d} {\fleche}  \aconf{_{\emptyset}[P_1 |[L]|  P_2]}{\zero+d}\; \big)
\end{align*}
This shows that $\hat{\myequiv}_0=\emptyset$. Note that $\myequiv_{0}$ satisfies  the invariants (\ref{INV1}) and (\ref{INV2}) (the same arguments used in \textbf{{\emph{ Case (ii)}}} hold). Let us show that $\myequiv_{1}$ satisfies the invariant 
(\ref{INV2}).  To this goal let $a$ be an action, we consider  the case when $a\notin L\cup \set{\delta}$ and $i=1$. The case  when $a\notin L\cup\set{\delta}, \; i=2$ and the case when $a \in L \cup \delta$ are handled similarly.  Let $i=1$ and  assume the derivation:
\begin{align}\label{par:1}
_{\emptyset}[P_1'] \stackrel{{}_{\emptyset}a_x }{\flecheop} \;{}_{\set{x:a:0}}[Q_1']
\end{align}
The induction hypothesis shows that the following derivation is possible:
\begin{align}\label{par:2}
\aconf{_{\emptyset}[P_1]}{\zero+d} \stackrel{_{\emptyset}a_x}{\fleche} \aconf{_{\set{x:a}}[Q_1]}{\zero+d[c_x \mapsto 0]}
\end{align}
and ensures that $\for{_{\set{x:a:0}}[Q_1']}=\for{_{\set{x:a}}[Q_1]}= Q_1$.
Therefore by applying the rule (I.V.a) of the operational semantics and considering the derivation (\ref{par:1}) above we get the derivation:
\begin{align*}
_{\emptyset}[P_1']\; |[L]| \;_{\emptyset}[P_2] \stackrel{_{\emptyset}a_x}{\flecheop} \;_{\set{x:a:0}}[Q_1] \;|[L]|\; _{\emptyset}[P_2]   
\end{align*}
Also by applying the rule (4.a) of the denotational semantics and considering the rule (\ref{par:2}) above we get the derivation:
\begin{align*}
\aconf{_{\emptyset}[P_1 |[L]| P_2]}{\zero+d} \stackrel{_{\emptyset}a_x} {\fleche} \aconf{_{\set{x:a}}[Q_1] \;|[L]|\; _{\emptyset}[P_2]}{\zero +d [c_x \mapsto 0]}
\end{align*}
Thus
\begin{align*}
\lbul{\myequiv}_1=\set{\big( \aconf{_{\set{x:a}}[Q_1] \;|[L]|\; _{\emptyset}[P_2]}{\zero +d [c_x \mapsto 0]},  \\  _{\set{x:a:0}}[Q_1] \;|[L]|\; _{\emptyset}[P_2]\big)} 
\end{align*}
and it is easy to check that $\myequiv_1$ satisfies the invariant (\ref{INV2}).
\end{caseinproof}
%reste
\begin{caseinproof}
The cases of the hide operator (rules (V.a) and (V.b) ) and of the interruption operator (rules (VI.a), (VI.b) and (VI.c) ) are handled by the induction machinery.
\end{caseinproof}
\begin{caseinproof}
If $P=\delay{d} Q$, then it suffices to prove the following Claim:
\begin{claim}
Let $d\in \dom$,  $tr= s \stackrel{\langle _{E}a_{x}, \varphi, \lambda \rangle}{\fleche}s'$ be a  transition of a given time-CTS and $tr_{+}$ be the same transition apart that we replace $\varphi$ with $\varphi + d$, i.e.  $tr_{+}= s_{+} \stackrel{\langle _{E}a_{x}, \varphi +  {d}, \lambda \rangle}{\fleche}s_{+}'$. Then, the transition $tr$ allows  the action $a$ at the time stamp $\tau$ if and only $tr_{+}$ allows $a$ at the time stamp $\tau+d$.
\end{claim} 
\begin{proof}{[of the Claim]}
Straightforward from the definition of the delay function $+$ (see Definition \ref{delay:fun:def}) since $\varphi +  d$ lifts  every (atomic) constraint $\alpha \le c_x  $ to $\alpha +d \le c_x $,  and $c_{x} \le \beta$ to $c_{x} \le \beta+ d $. This ends the proof of the Claim.
\end{proof}
\end{caseinproof}

\end{proofbycases}

\textbf{Induction step:} $n>0$.\\
That is, we consider $ \myequiv_{n}$ defined above by :
\begin{align*}
 \myequiv_{n}= & \set{(\mat{P}, \mat{Q}_{\tau})\; \tst \exists (\mat{P}^{n-1}, \mat{Q}_{\tau}^{n-1})\in \myequiv_{n-1} \tst \\ 
&  \mat{P}^{n-1} \stackrel{_{E}a_{x}} {\fleche} \mat{P} \tand \\
 &  \mat{Q}_{\tau}^{n-1} \stackrel{_{E_{\tau}}a_{x}}{\flecheop} \mat{Q}_{\tau} \textrm{ for some action } a  }\;\; \cup\\
& \set{(\mat{P}',\mat{Q}') \; \tst\;  \exists d \in \mathbb{R} \tst  \mat{P}\stackrel{d} {\fleche} \mat{P}' \\
& \;\;\; \tand   \mat{Q}_{\tau}\stackrel{d} {\flecheop} \mat{Q}' }\\
\end{align*}
We recall that the induction hypothesis implies that $\myequiv_{n-1}$ satisfies the invariants (\ref{INV1}) and (\ref{INV2}), and that $\myequiv_n$ satisfies the invariant (\ref{INV2}). 
As we have done in the initial step, in this step we shall prove that \emph{(i)} $\hat{\myequiv}_n=\emptyset$, \emph{(ii)} $\myequiv_n$ satisfies  the invariants (\ref{INV1}) and (\ref{INV2}) and \emph{(iii)} $\myequiv_{n+1}$ satisfies  the invariant (\ref{INV2}). As a consequence of the induction hypothesis  $\myequiv_{n}$  may be written as:
\begin{align*}
 \myequiv_{n}= & \set{ (\aconf{_{E}[P]}{\nu} \; , _{E_{\tau}}[P]) \; \tst \exists (\mat{P}^{n-1}, \mat{Q}_{\tau}^{n-1})\in \myequiv_{n-1} \tst \\ 
&  \mat{P}^{n-1} \stackrel{_{E}a_{x}} {\fleche} \;  _{E}[P] \tand \\
 &  \mat{Q}_{\tau}^{n-1} \stackrel{_{E_{\tau}}a_{x}}{\flecheop}\;  _{E_{\tau}}[P] \textrm{ for some action } a  }\;\; \cup\\
& \set{(\mat{P}',\mat{Q}') \; \tst\;  \exists d \in \mathbb{R} \tst  \; \aconf{{_{E}}[P]}{\nu}\stackrel{d} {\fleche} \mat{P}' \\
& \;\;\; \tand   {_{E_{\tau}}}[P]\stackrel{d} {\flecheop} \mat{Q}' }\\
\end{align*}
where the pair $({E}, E_{\tau})$ is synchronized.
Again, the proof  is by structural induction on  $P$ and similar to the one given in the initial step.
\begin{proofbycases}
\begin{caseinproof}The case $P=stop$ is obvious because the pair 
$(E , E_{\tau})$ is synchronized.
\end{caseinproof}
\begin{caseinproof}$P=skip\set{u}$.  The rule (1.$b$) of the denotational semantics ensures that $\forall d \; 0\le d \le u $  and counting form the moment when all the actions of $E$ have finished (see the definition of $\mat{F}^{\le u}(.)$),  there is a  derivation 
\begin{align*}
\aconf{_{E}[skip\set{u}]}{\nu} \stackrel{d}{\fleche} \aconf{_{E}[skip\set{u}]}{\nu+d}
\end{align*}
 In the same way, the rule (I.$\tau$) of the operational semantics allows,   for each  $d \in ]0,u]$,  such that all the actions of $E_{\tau}$ have finished,  the derivation  
\begin{align*}
_{E_{\tau}}[skip\set{u}] \stackrel{d}{\flecheop} \; _{E_{\tau}}[skip \set{u-d}]
\end{align*} 
Since the pair $(E,E_{\tau})$  is synchronized thus  $\hat{\myequiv}_n=\emptyset$, therefore $\myequiv_n \cup \hat{\myequiv}_n$ is a $\tau$-bisimulation.
Using the same arguments of the \textbf{\emph{Case(ii) }} of the initial step
 one can  we show easily  that  $\myequiv_n$  satisfies the invariants (\ref{INV1}), (\ref{INV2})  and  $\myequiv_{n+1}$ satisfies the invariant (\ref{INV2}).\\
 \end{caseinproof}

\begin{caseinproof} The case $P=a\set{u};Q$ is similar to the previous one apart that 
  we deal here with the action $a$ instead of $\delta$, and with the process $Q$ instead of the process $stop$. 
\end{caseinproof}
\begin{caseinproof} The case $P=Q\ou R$ is straightforward by applying the induction hypothesis to $Q$ and $R$. 
\end{caseinproof}

\begin{caseinproof}$P=P_1 |[L]| P_2$.  First we show  that $\hat{\myequiv}_n=\emptyset$.  The rule (IV.$\tau$) of the operational semantics implies that  that if 
\begin{align*}
_{E_{\tau}}[\; P_1 |[L]| P_2\; ] \stackrel{d}{\flecheop}\; _{E_{\tau}}[P_1'] |[L] [P_2']
\end{align*}
then 
\begin{align*}
_{E_{\tau}}[P_i] \stackrel{d} {\flecheop} _{E_{\tau}}[P_i']\;\;\;\;\; i=1,2.
\end{align*}
By applying the induction hypothesis to both $P_1$ and $P_2$ we get  the possible derivations:
\begin{align*}   
\aconf{_{E_{\tau}}[P_i]}{\zero} \stackrel{d} {\fleche}  \aconf{_{E_{\tau}}[P_i]}{\zero + d} \;\;\;\;\;\; i=1,2
\end{align*}
Hence 
\begin{align*}
\big(\; \aconf{_{E}[P_1 |[L]|  P_2]}{\nu} \stackrel{d} {\fleche}  \aconf{_{E}[P_1 |[L]|  P_2]}{\nu+d}\; \big)
\end{align*}
Since the pair $(E,E_{\tau})$ is synchronized, then   $\hat{\myequiv}_n=\emptyset$. Note that for the same reason,  $\myequiv_{n}$ satisfies trivially the invariants (\ref{INV1}) and (\ref{INV2}). Let us show that $\myequiv_{1}$ satisfies the invariant 
(\ref{INV2}).  To this goal let $a$ be an action, we consider  the case when $a\notin L\cup \set{\delta}$ and $i=1$. The case  when $a\notin L\cup\set{\delta}, \; i=2$ and the case when $a \in L \cup \delta$ are handled similarly.  Let $i=1$ and  assume the derivation:
\begin{align}\label{par:11}
_{E_{\tau}}[P_1'] \stackrel{{}_{E_{\tau}}a_x }{\flecheop} \;{}_{\set{x:a:0}}[Q_1']
\end{align}
The induction hypothesis shows that the following derivation is possible:
\begin{align}\label{par:12}
\aconf{_{E}[P_1]}{\nu+d} \stackrel{_{E}a_x}{\fleche} \aconf{_{\set{x:a}}[Q_1]}{\nu+d[c_x \mapsto 0]}
\end{align}
and ensures that $\for{_{\set{x:a:0}}[Q_1']}=\for{_{\set{x:a}}[Q_1]}= Q_1$.
Therefore by applying the rule (I.V.a) of the operational semantics and considering the derivation (\ref{par:11}) above we get the derivation:
\begin{align*}
_{E_{\tau}}[P_1']\; |[L]| \;_{\emptyset}[P_2] \stackrel{_{E_{\tau}}a_x}{\flecheop} \;_{\set{x:a:0}}[Q_1] \;|[L]|\; _{\emptyset}[P_2]   
\end{align*}
Also by applying the rule (4.a) of the denotational semantics and considering the rule (\ref{par:12}) above we get the derivation:
\begin{align*}
\aconf{_{\emptyset}[P_1 |[L]| P_2]}{\zero+d} \stackrel{_{\emptyset}a_x} {\fleche} \aconf{_{\set{x:a}}[Q_1] \;|[L]|\; _{\emptyset}[P_2]}{\zero +d [c_x \mapsto 0]}
\end{align*}
Thus
\begin{align*}
\lbul{\myequiv}_1=\set{\big( \aconf{_{\set{x:a}}[Q_1] \;|[L]|\; _{\emptyset}[P_2]}{\zero +d [c_x \mapsto 0]},  \\  _{\set{x:a:0}}[Q_1] \;|[L]|\; _{\emptyset}[P_2]\big)} 
\end{align*}
and it is easy to check that $\myequiv_1$ satisfies the invariant (\ref{INV2}).
\end{caseinproof}
%reste
\begin{caseinproof}
The cases of the hide operator (rules (V.a) and (V.b) ),  of the interruption operator (rules (VI.a), (VI.b) and (VI.c) ), and the delay operator are handled by the induction machinery.
\end{caseinproof}

\end{proofbycases}
This ends the proof of Theorem \ref{main:theorem}. \qed
\end{proof}

%\section*{Appendix B: further definitions}
\setcounter{theorem}{1}
 \begin{theorem} 
 For every timed configuration $\mat{P}, \mat{Q}$ of duration-CSP$_{\rho}$, for every  action $a$ and for every duration-CSP process $E$, if $\mat{P}\tcsbis \mat{Q}$ then $\reff{a}{E}{\mat{P}}\tcsbis \reff{a}{E}{\mat{Q}}$.  
 \end{theorem}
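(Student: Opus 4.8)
The plan is to prove the theorem by exhibiting an explicit timed causal bisimulation. Suppose $\mathcal{R}$ is a timed causal bisimulation with associated events' bijection $f$ witnessing $\mathbf{P} \tcsbis \mathbf{Q}$. I would define the candidate relation
\begin{align*}
\mathcal{S} = \set{ (\reff{a}{E}{\mathbf{P}'}, \reff{a}{E}{\mathbf{Q}'}) \tst (\mathbf{P}',\mathbf{Q}') \in \mathcal{R} } \;\cup\; \set{ (\mathbf{P}_1 \gg^{x} \reff{a}{E}{\mathbf{P}'}, \mathbf{Q}_1 \gg^{y} \reff{a}{E}{\mathbf{Q}'}) \tst (\mathbf{P}_1,\mathbf{Q}_1) \in \mathcal{R}_E, (\mathbf{P}',\mathbf{Q}') \in \mathcal{R}, f(x)=y }
\end{align*}
where the second component handles configurations that arise after a refined action has been started (rule R.5), and $\mathcal{R}_E$ is the bisimulation on the two copies of the subprocess $E$ obtained by running the refinement body from the same causal context (these copies are literally identical up to the events' bijection, so $\mathcal{R}_E$ is just graph isomorphism of the two unfoldings of $_{E_\tau}[E]$). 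The associated events' bijection for $\mathcal{S}$ extends $f$ by the identity on the fresh events generated inside the $E$-copies (with the $x\mapsto y$ correspondence fixed at the moment of refinement). Then it remains to check the four bisimulation clauses of Definition \ref{tcsbis:def}.

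The verification proceeds by cases on the transition rule used. For a transition of $\reff{a}{E}{\mathbf{P}'}$ there are exactly three possibilities. First, rule R.$\tau$.3: a delay $\mathbf{P}' \stackrel{d}{\flecheop} \mathbf{P}''$; since $(\mathbf{P}',\mathbf{Q}')\in\mathcal{R}$, clause 1.2 of Definition \ref{tcsbis:def} gives the matching delay $\mathbf{Q}' \stackrel{d}{\flecheop} \mathbf{Q}''$ with $(\mathbf{P}'',\mathbf{Q}'')\in\mathcal{R}$, and R.$\tau$.3 lifts it to $\reff{a}{E}{\mathbf{Q}'}$, staying in $\mathcal{S}$. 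Second, rule R.4: a non-refined action $b\neq a$, $\mathbf{P}' \stackrel{_{E_\tau}b_x}{\flecheop} \mathbf{P}''$; again clause 1.1 supplies $\mathbf{Q}' \stackrel{_{F_\tau}b_y}{\flecheop} \mathbf{Q}''$ with the cause-set correspondence under $f$ and $(\mathbf{P}'',\mathbf{Q}'')\in\mathcal{R}$, and R.4 lifts both sides. Third, rule R.5: $\mathbf{P}'$ offers the action $a$, producing $\mathbf{P}''[z/x]\gg^{x}\reff{a}{E}{\mathbf{P}'''}$; here I use that $\mathbf{P}'\tcsbis\mathbf{Q}'$ so $\mathbf{Q}'$ offers $a$ with a matching cause set, and — crucially — that the refinement body $_{E_\tau}[E]$ plugged in on both sides is the *same* process $E$ with the *same* causal context $E_\tau$ (because the cause sets match under $f$), so the two instances of $\gg^{x}$, $\gg^{y}$ are paired in the second component of $\mathcal{S}$, and the residuals land in $\mathcal{S}$. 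Configurations of the form $\mathbf{P}_1\gg^{x}\reff{a}{E}{\mathbf{P}'}$ are handled symmetrically using rules R.1, R.2, R.3, R.$\tau$.1, R.$\tau$.2: R.1/R.$\tau$.1 step inside $\mathbf{P}_1$ and are matched via $\mathcal{R}_E$; R.2 (successful termination of $\mathbf{P}_1$) hands control to $\reff{a}{E}{\mathbf{P}'}$ and is matched because the successful-termination events are put in correspondence; R.3/R.$\tau$.2 step inside $\reff{a}{E}{\mathbf{P}'}$ and are matched via $\mathcal{R}$ and the already-established R.4/R.5 analysis. The reverse clauses 2.1–2.2 are entirely symmetric.

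The main obstacle I expect is the bookkeeping of event names and of the events' bijection through rule R.5 and the partial-sequencing rules. The subtlety is that when a refined action $a$ with cause set $E_\tau$ fires, the fresh events created inside the two copies of $E$ must be kept in bijective correspondence, and the event $x$ marking the successful termination of the $E$-copy on the left must be mapped to the corresponding $y$ on the right; one must check that the side-condition $z=get(\dots)$ in each rule picks names in a way that is compatible with this bijection, i.e. that the $get$ function's choices can be synchronized across the two sides (since $get$ is only assumed to satisfy $get(E)\in E$, one argues up to renaming, exactly as in the appendix proof of Theorem \ref{main:theorem} where the identity bijection was used after observing that renaming is harmless). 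A second, smaller obstacle is verifying that $\mathcal{R}_E$ — the bisimulation on the two $E$-copies — genuinely exists and is a timed causal bisimulation in its own right; this is immediate because both copies are generated from $_{E_\tau}[E]$ with the *same* $E_\tau$, so the identity relation on reachable configurations (with identity events' bijection) works, but it should be stated carefully so the composition with the outer $f$ is well-defined.
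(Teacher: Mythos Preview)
Your overall strategy matches the paper's: build an explicit timed causal bisimulation containing the pair $(\reff{a}{E}{\mathbf{P}},\reff{a}{E}{\mathbf{Q}})$, with one component for ``pure'' refined configurations and a second component for configurations carrying a partial-sequencing context $\gg^{x}$. The paper also exploits that the refinement body spawned on both sides is literally the same process, so in its second component the left argument of $\gg$ is the \emph{same} configuration on both sides; your use of a separate $\mathcal{R}_E$ (which you then observe is the identity) is a harmless detour.

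There is, however, a genuine gap: your candidate relation $\mathcal{S}$ is not closed under transitions. Consider a configuration $\mathbf{P}_1 \gg^{x} \reff{a}{E}{\mathbf{P}'}$ in which $\mathbf{P}'$ offers the action $a$ again, concurrently with the still-running first refinement (its cause set does not contain $x$). Rule R.3 then fires, and inside it rule R.5 applies, producing a configuration of the shape
\[
\mathbf{P}_1 \;\gg^{x}\; \bigl(\mathbf{E}' \;\gg^{x'}\; \reff{a}{E}{\mathbf{P}''}\bigr),
\]
i.e.\ a \emph{nested} $\gg$ structure. Your $\mathcal{S}$ as written contains only one level of $\gg$, so this target is not in $\mathcal{S}$ and the bisimulation check fails at exactly the point you describe as ``R.3 \ldots\ matched via the already-established R.5 analysis''. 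More generally, any number of refined $a$'s may be simultaneously in progress, yielding arbitrarily deep nesting.

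The paper closes this gap by defining its second component \emph{recursively}: pairs $(\mathbf{R}\gg^{x}\mathbf{P}^{+},\mathbf{R}\gg^{y}\mathbf{Q}^{+})$ are admitted whenever (after suitable renaming of $x,y$) the pair $(\mathbf{P}^{+},\mathbf{Q}^{+})$ is itself in the relation being defined. This coinductive definition absorbs arbitrary nesting of $\gg$. To repair your argument you must either adopt the same recursive definition, or replace the single-level second component by an explicit union over all finite $\gg$-towers.
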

\begin{proof}
First we construct a binary relation linking the elements of $\reff{a}{E}{\mat{P}}$ and $\reff{a}{E}{\mat{Q}}$, and second we prove that it is a timed causal bisimulation. \\ 
We let $\myequiv=\myequiv_{1} \cup \myequiv_{2}$ where 
\begin{align*}
\myequiv_1= \set{(\reff{a}{E}{\mat{P}},\reff{a}{E}{\mat{Q}})_f\; \textrm{ s.t }\; (\mat{P},\mat{Q})_{f} \in \myequiv'} 
\end{align*}
such that $\myequiv'$ is a timed causal bisimulation, such bisimulation does exist by the hypothesis of the Theorem.

\begin{align*}
\myequiv_2=\set{(\mat{P}\gg^{x}\mat{P}^{+} , \mat{P}\gg^{y}\mat{Q}^{+} )_{f}\; \textrm{ s.t } (\mat{P}^{+}[v/x], \mat{Q}^{x}[w/y])_{f'} \in \myequiv}
\end{align*}
where 
\begin{align*}
& v \notin \big (\psi (\mat{P}^{+}) -\set{x}\big )  \; \cup f^{-1}\big(\psi(\mat{Q}^{+}) - \set{y}\big),  \\
& w \notin  f \big (\psi(\mat{P}^{+}) - \set{x} \big ) \cup  \big(\psi(\mat{Q}^{+}) - \set{y}\big),\\
& f'= f_{\pi_1(\psi({P})^{+} - \set{x} )} \cup f^{-1}\big(\psi(\mat{Q}^+ - \set{y}) \big), \tand \\
& f= f' \cup Id_{\pi_1(\psi(\mat{P}))}.
\end{align*}
Now we show that $\myequiv$ is a timed causal bisimulation.\\

\textbf{Initial step}\\
That is, we verify that $\myequiv_1$ is a timed causal bisimulation:
\begin{enumerate}
\item If $\reff{a}{E}{\mat{P}}\stackrel{_{E_{\tau}}b_{x}}{\fleche} \mat{H} $ then we distinguish two  cases according to $\mat{H}$:
\begin{itemize} 
\item $\mat{H}\equiv \reff{a}{E}{\mat{P}'}$, therefore $\mat{P}\stackrel{_{E_{\tau}}b_{x}}{\fleche} \mat{P}'$  and $a\neq b$. According to the hypothesis there exists a derivation $\reff{a}{E}{\mat{Q}}\stackrel{_{F_{\tau}}b_{y}}{\fleche} \reff{a}{E}{\mat{Q}'}$ such that 
\begin{enumerate}[(a)]
\item the definition  of $f$ ensures that for each $u \in \psi(\reff{a}{E_{\tau}}{\mat{P}})$, if $u \notin E$ and $f(u) \in \psi(\reff{a}{E}{Q})$ then $f(u) \notin F_{\tau}$, 
\item since  there exist $v,w \in \evns$ such that \\ $(\mat{P}'[v/x],\mat{Q}'[w/y])_{f''} \in \myequiv'$ where 
\begin{align*} f''= f_{\pi_1(\psi(\mat{P}') - \set{x})} & \cup f^{-1}(\psi(\mat{Q'})-\set{y}) \\ &\cup \set{(v,w)}
\end{align*}   
by using the definition  \ref{} it follows that 
\begin{align*}
(\reff{a}{E}{\mat{P}'[v/x]},\reff{a}{E}{\mat{Q}'[w/y]})_{\bar{f}} \in \myequiv_1
\end{align*}
where 
\begin{align*}\bar{f}=  f_{\pi_1(\psi(\mat{P}' - \set{x}))} &\cup f^{-1}(\psi(\mat{Q}')-\set{x}) \\ & \cup \set{(v,w)}
\end{align*}  
\end{enumerate}
\item $\mat{H}\equiv \mat{R} \gg^{z} \reff{a}{E}{\mat{P}'}$, then $\mat{P}\stackrel{_{E_{\tau}}a_{z}} {\fleche} \mat{P}'$ and $_{\emptyset}[E]\stackrel{_{\emptyset}b_y} {\fleche} \mat{R}$. According to the hypothesis we have that $\mat{Q}\stackrel{_{F_{\tau}}a_s}{\fleche} \mat{Q}'$, and by taking \\ $x \notin \psi(\mat{P}-\set{z}) \cup \psi(\mat{Q}') -\set{s}$, it follows that \\
$\reff{a}{E}{\mat{Q}} \stackrel{_{F_{\tau}}b_x} {\fleche} \mat{R} \gg^{s} \reff{a}{E}{\mat{Q}'}$
\end{itemize}
\item similar to 1.
\item if $\reff{a}{E}{\mat{P}} \stackrel{d}{\fleche}\reff{a}{E}{\mat{P}'} $, then $\mat{P} \stackrel{d}{\fleche}\mat{P}'$. According to the hypothesis there exists a derivation  $\mat{Q} \stackrel{d}{\fleche}\mat{Q}'$  such that $(\mat{P}',\mat{Q}')_{f} \in \myequiv_1$ for some $f$, therefore it follows that 
$(\reff{a}{E}{\mat{P}'},\reff{a}{E}{Q'})_{f} \in \myequiv_1$.
\end{enumerate}

\textbf{Induction step.}
In this step we consider the elements of $\myequiv_2$, these elements are of the form 
$(\mat{P}\gg^{x}\mat{P}^{+},\mat{P}\gg^{y}\mat{Q}^{+})_{f}$  where $(\mat{P}^{+}[v/x], \mat{Q}^{+}[w/y])_{f'} \in \myequiv$ with \\
$f'=f_{\pi_1(\psi(\mat{P})^{+} - \set{x})} \cup f^{-1}(\psi(\mat{Q}^{+}) - \set{y}) \cup \set{(u,w)}$:
\begin{enumerate}[{1}.1.]
\item $\mat{P}\gg^{x}\mat{P}^{+} \stackrel{_{E_{\tau}}a_z} {\fleche} \mat{H}$, we distinguish three cases according to $\mat{H}$:
\begin{itemize}
\item $\mat{H} \equiv \mat{P}' \gg \mat{P}^{+}$, then $\mat{P}\stackrel{_{E_{\tau}}a_z} {\fleche} \mat{P}'$.  By assuming that $z \notin \psi(\mat{P}' \gg^{x} \mat{P}^{+}) \cup f^{-1}\big(\psi(\mat{Q}^{+}) \cup \set{y}\big)  $, and applying the rule R.1  we obtain the derivation \\
$\mat{P} \gg^{y} \mat{Q}^{+}  \stackrel{_{E_{\tau}}a_z} {\fleche}\mat{P}' \gg^{y} \mat{Q}^{+}$, and we have done.    
\item $\mat{H} \equiv\mat{P}^{+}[z/x]$, then $\mat{P} \stackrel{_{E_{\tau}}\delta_z}{\fleche} \mat{P}'$ and $a=i$.    By assuming that $z \notin  \psi(\mat{P}^{+}) \cup f^{-1}\big( \psi(\mat{Q}^{+}) -\set{y} \big)$, and applying the rule R.2  we obtain the derivation \\
$\mat{P} \gg^{y} \mat{Q}^{+}  \stackrel{_{E_{\tau}}i_z} {\fleche} \mat{Q}^{+}[z/y]$, and we have done.
\item  $\mat{H} \equiv \mat{P} \gg^{x} {\mat{P}^{+}}'$,  then the rule R.3 implies that $\mat{P}^{+} \stackrel{_{E_{\tau}}a_z }{\fleche} {\mat{P}^{+}}'$; by applying the induction hypothesis there exists a derivation \\
$\mat{Q} \stackrel{_{F_{\tau}}a_s} {\fleche}{\mat{Q}^{+}}'$. By assuming $s \notin \psi(\mat{P}')$ we get \\
$
\mat{P} \gg^{y} \mat{Q}^{+} \stackrel{_{E_{\tau}}a_s} {\fleche} \mat{P} \gg^{y} {\mat{Q}^{+}}'.
$
\end{itemize}
\item $\mat{P} \gg^{x} \mat{P}^{+} \stackrel{d}{\fleche} \mat{H}$, we distinguish two cases according to $\mat{H}$:
\begin{itemize}
\item $\mat{H}\equiv \mat{P}' \gg^{x} \mat{P}^{+}$, the rule R.$\tau$.1 implies that \\ $\mat{P} \stackrel{d}{\fleche} \mat{P}' $ with $x \in \psi(\mat{P}^{+})$.  The induction hypothesis ensures that $(\mat{P}^{+}, \mat{Q}^{+}) \in \myequiv$.  Hence, $y \in \psi(\mat{Q}^{+})$.   By applying the rule R.$\tau$.1 we get the  derivation $\mat{P} \gg^{y} \mat{Q}^{+} \stackrel{d}{\fleche} \mat{P}' \gg^{y} \mat{Q}^{+} $.
\item $\mat{H} \equiv \mat{P}' \gg^{x} {\mat{P}^{+}}'$, the rule R.$\tau$.2 implies that  $\mat{P} \stackrel{d}{\fleche} \mat{P}' $   and   $\mat{P}^{+} \stackrel{d}{\fleche} {\mat{P}^{+}}' $ with $x \notin \psi(\mat{P}^{+})$. Hence $y \notin \psi(\mat{Q}^{+})$. By applying the rule R.$\tau$.2 we get the derivation:
 $\mat{P} \gg^{y} \mat{Q}^{+} \stackrel{d}{\fleche} \mat{P}' \gg^{y} {\mat{Q}^{+}}' $.
\end{itemize}
\end{enumerate}
\end{proof}

\section{On the timed causal bisimulation over the timed configurations}
\begin{definition}\label{merde}
A $\tau$-bisimulation linking  the timed causal configurations of $\tconf$ is a binary relation $\myequiv$ that comes with an events' bijection $f: \evns \flechep \evns$, and satisfying the following conditions:
\begin{enumerate}[{1}.1.] 
\item if $\mat{Q}_{\tau} \stackrel{_{E_{\tau}}a_x} {\flecheop} \mat{Q}_{\tau}'$ then there exists $\mat{P}_{\tau} \stackrel{_{F_{\tau}}a_y} {\flecheop} \mat{P}_{\tau}'$ such that 
\begin{enumerate}[{i.}]
\item  $z:b:t \in E_{\tau}$ if and only if  $f(z):b:t \in F_{\tau}$,  for some $t\in \dom$, and
\item $(\mat{Q}_{\tau}',\mat{P}_{\tau}')_{f'} \in  \myequiv $ where \\ $\textrm{ }\hspace{12mm}f':= (f_{\pi_1(\psi(\mat{Q}_{\tau}')-x )})_{\pi_2(\psi(\mat{P}_{\tau}')-y )} \; \cup \set{(x,y)}$.
\end{enumerate}
\item if $\mat{Q}_{\tau} \stackrel{d} {\flecheop} \mat{Q}_{\tau}' $ then $\mat{P}_{\tau} \stackrel{d} {\flecheop} \mat{P}_{\tau}'$ and $(\mat{Q}_{\tau}' ,\mat{P}_{\tau}')_{f} \in \myequiv$.
 \end{enumerate}
%symmetrico
\begin{enumerate}[{2}.1.] 
\item if $\mat{P}_{\tau} \stackrel{ _{F_{\tau}} a_y} {\flecheop} \mat{P}_{\tau}'$ then there exists $\mat{Q}_{\tau} \stackrel{_{E_{\tau}}a_x} {\flecheop} \mat{Q}_{\tau}'$ such that 
\begin{enumerate}[{i.}]
\item  $z:b:t \in E_{\tau} $ if and only if  $f(z):b:t \in F_{\tau}$, for some $t\in \dom$, and
\item $(\mat{Q}_{\tau}',\mat{P}_{\tau}')_{f'} \in  \myequiv $ where \\ $\textrm{ }\; \hspace{12mm}f':= (f_{\pi_1(\psi(\mat{Q}_{\tau}')-x )})_{\pi_2(\psi(\mat{P}_{\tau}')-y )} \;\cup \set{(x,y)}$.
\end{enumerate}
\item if $\mat{P}_{\tau} \stackrel{d} {\flecheop} \mat{P}_{\tau}'$ then $\mat{Q}_{\tau} \stackrel{d}{\flecheop} \mat{Q}_{\tau}'$ and $(\mat{Q}_{\tau}', \mat{P}_{\tau}')_{f'} \in \; \myequiv$.
 \end{enumerate}

\end{definition}

\end{document}